\definecolor{teal}{HTML}{008080}
\newcommand{\descr}[1]{\vspace{0.2cm} \noindent \textbf{#1}}
\newtheorem{proposition}{Proposition}
\newtheorem{corollary}{Corollary}
\theoremstyle{definition}
\newtheorem{example}{Example}
\begin{document}

\title{Unintended Memorization and Timing Attacks in Named Entity Recognition Models}\thanks{This is the full version of the paper with the same title to appear in the Proceedings of the 23rd Privacy Enhancing Technologies Symposium, PETS 2023.}





\author{Rana Salal Ali}
\affiliation{
    \institution{Macquarie University}
    \country{Australia}}
\email{ranasalal.ali@mq.edu.au}

\author{Benjamin Zi Hao Zhao}
\affiliation{
    \institution{Macquarie University}
    \country{Australia}}
\email{ben_zi.zhao@mq.edu.au}

\author{Hassan Jameel Asghar}
\affiliation{
    \institution{Macquarie University}
    \country{Australia}}
\email{hassan.asghar@mq.edu.au}

\author{Tham Nguyen}
\affiliation{
    \institution{Macquarie University}
    \country{Australia}}
\email{tham.nguyen@mq.edu.au}

\author{Ian David Wood}
\affiliation{
    \institution{Macquarie University}
    \country{Australia}}
\email{ian.wood@mq.edu.au}

\author{Dali Kaafar}
\affiliation{
    \institution{Macquarie University}
    \country{Australia}}
\email{dali.kaafar@mq.edu.au}

\begin{abstract}
{Named entity recognition models (NER), are widely used for identifying named entities (e.g., individuals, locations, and other information) in text documents. Machine learning based NER models are increasingly being applied in privacy-sensitive applications that need automatic and scalable identification of sensitive information to redact text for data sharing. In this paper, we study the setting when NER models are available as a black-box service for identifying sensitive information in user documents and show that these models are vulnerable to membership inference on their training datasets.
With updated pre-trained NER models from spaCy~\cite{spacy}, we demonstrate two distinct membership attacks on these models.
Our first attack capitalizes on unintended memorization in the NER's underlying neural network, a phenomenon NNs are known to be vulnerable to. 
Our second attack leverages a timing side-channel to target NER models that maintain vocabularies constructed from the training data.
We show that different functional paths of words within the training dataset in contrast to words not previously seen have measurable differences in execution time.
Revealing membership status of training samples has clear privacy implications. For example, in text redaction, sensitive words or phrases to be found and removed, are at risk of being detected in the training dataset. Our experimental evaluation includes the redaction of both password and health data, presenting both security risks and a privacy/regulatory issues. This is exacerbated by results that indicate memorization after only a single phrase. 
We achieved a 70\% AUC in our first attack on a text redaction use-case. We also show overwhelming success in the second timing attack with an 99.23\% AUC. Finally we discuss potential mitigation approaches to realize the safe use of NER models in light of the presented privacy and security implications of membership inference attacks.
}
\end{abstract}


  
\maketitle 

\section{Introduction}

Language models (LMs) are statistical models that assign probabilities to words to perform different natural language processing tasks~\cite{carlini2020extracting, hoang2019efficient, lample2016neural, bengio2003neural}.
LMs are trained on large datasets, and over the years, their model architecture has become substantial with 350 million~\cite{devlin2018bert} to over 100 billion parameters~\cite{radford2019language, brown2020language}, which increases the model's ability to learn the language and fluently generate text. At the same time, it has been shown that LMs are susceptible to membership inference~\cite{carlini2019secret, carlini2020extracting, zanella2020analyzing, song2019auditing}, whereby it is possible to tell if a given text was used in the, potentially sensitive, training dataset. Membership inference is a general attack on machine learning models and was first introduced by Shokri et al~\cite{shokri2017membership}. 

Membership inference attacks in LMs have predominantly been demonstrated on the task of \emph{text prediction}~\cite{carlini2019secret, carlini2020extracting}, e.g., predicting the most probable word given a sequence of words. 
In this paper, we instead study membership inference attacks on the language problem of \emph{text classification}~\cite{lample2016neural}, where the model classifies tokens in a given text sequence into classes or labels. 
More specifically, we target named entity recognition (NER), which is a task in information retrieval used to identify and label chunks of data appearing in unstructured text into predefined categories or labels~\cite{nadeau2007survey}. Our focus on NER models is motivated by its increasing potential for use in applications involving sensitive data. NER models have been used to categorise medical entities (e.g., drugs) in unstructured medical data~\cite{habibi-med-ner, ge2020fedner}. Similarly, NER models have been proposed to automatically identify sensitive information to prevent data leaks~\cite{gomez2010data}, and in a similar application of automatically redacting sensitive information from documents before sharing~\cite{kanchana2020optical}, which is otherwise a highly manual and laborious task requiring domain expertise~\cite{gordon2013mra}. NER models have also been proposed to automatically de-identify patient notes in electronic health records~\cite{dernoncourt2017identification}. In a contemporary use case, these models have been proposed to de-identify sensitive information from COVID-19 data before sharing and publication for research purposes~\cite{ner-covid}. The underlying data in these uses of NER models is highly sensitive. If this data is used to {(re-)train} an NER model, then one would hope that the model is not susceptible to membership inference. 


More elaborately, our scenario is when a service provider builds on top of an off-the-shelf NER model, pre-trained on a large public corpus, by updating the model using its own, potentially sensitive, private dataset in order to expand the set of named entities (labels) for specialized tasks. This scenario  reflects real-world use of NER models, as often they are pre-trained on a large corpus, and hence both the pre-trained model and the corpus is public information. However, the re-training of the generic model maybe for more specific applications, such as redaction of sensitive information, in which case the private dataset may be sensitive.
These specific applications may include examples listed above, e.g., identifying medical entities and redaction of sensitive information. The resulting model is then made available as an online service for customers to query in a black-box manner. A query is the customer's document which she would like to process. For example, a document for automatic redaction of sensitive text. The output is the processed document in the form of identified labels for each token and their probability scores.\footnote{See Section~\ref{sec:MIA_memorization} for the precise definition of the input and output.} We investigate the susceptibility of this private dataset to inference attacks.    

We choose spaCy~\cite{spacy} as a representative of NER models, which is an open-source natural language processing library based on neural architecture~\cite{lample2016neural}. Two components of spaCy are common across a wide range of NER models~\cite{yadav2019survey}: a tokenizer, that accesses an internal vocabulary, and the neural network (NN) based NER component, which takes input from the tokenizer and outputs labels. See Figure~\ref{fig:spacy-updated-ner}. When the spaCy NER model is updated with new training data, both the vocabulary and the NN model are updated. The updated vocabulary contains any new words contained in the training data. In this paper, we demonstrate two membership inference attacks on NER models with the spaCy NER model as a use-case.

The first attack exploits the NN model, and demonstrates that membership of sensitive words, such as passwords and credit card numbers, can be inferred even when they occur in the training model only once~\cite{feldman2020does}.
This phenomenon is related to unintended memorization in NNs. While unintended memorization has been demonstrated in NNs in general including text prediction models~\cite{carlini2019secret, carlini2020extracting}, our results show that the spaCy NER model is particularly vulnerable to this attack. We first demonstrate membership inference using a password use-case, where the NER model is updated with a single phrase containing a secret word, i.e., password (cf. Section~\ref{sec:MIA_memorization}). We use an existing membership inference algorithm from Salem et al.~\cite{salem2019ml}. The novelty is the first demonstration of this attack on NER models. Our results show that an attacker with a dictionary of passwords, one of which is the target password, can successfully infer the target password even when the phrase is only included once (to update the NER model), clearly demonstrating memorization of the NN model. Within Section~\ref{sec:MIA_memorization}, we also measure how different parameters have a direct impact on the degree of memorization, including the 
strength (entropy) of the memorized password, and number of insertions of the same password into the training data. Additionally, we demonstrate this memorization generalizes beyond passwords to phone numbers, credit cards and IP addresses. We also demonstrate that it is possible to infer words close to the secret words, which helps an attacker find the target password even when it is not in her initial dictionary of possible passwords. This is known as attribute inference in the literature~\cite{yeom2018privacy, zhao2021feasibility}. Next we demonstrate our attack on a practical use-case of document redaction, whereby the NER model is updated on a publicly available medical dataset, which we assume to be a private dataset. We show that an attacker with access to the NER model can successfully infer membership (cf. Section~\ref{sec:doc-redact}).

The second attack is a \emph{timing-based side-channel attack} which exploits the difference in execution paths when the NER model is queried on a text containing a member word (in the training data) versus a non-member word. This attack exploits the tokenizer, and is specific to NER models which use a vocabulary. This feature differentiates NER models from other machine learning models where, from a black-box point-of-view, there is little difference in execution paths between a member and a non-member sample. 
Finally, by exploiting the difference in execution paths of the NER models when processing member secrets in the training data and non-member text, our timing attack can infer the member secrets used to update the NER model (cf. Section~\ref{sec:MIA_timing}) with an AUC of 0.99.


Next, we introduce preliminaries and define our threat model in Section~\ref{sec:background}. 
In Section~\ref{sec:MIA_memorization}, we demonstrate the unintended memorization of the NER model and show our first membership inference attack against the NER models by exploiting its memorization. We then present the experimental results of our membership inference attack on NER models for document redaction use-case in Section~\ref{sec:doc-redact}.
In Section~\ref{sec:MIA_timing}, we present our NER model membership inference attack exploiting a timing side-channel and then discuss its experimental results. We propose potential defences for both attacks in Section~\ref{sec:defences}. We review related work in Section~\ref{sec:related_work}. 
\section{Background and Threat Model}\label{sec:background}
We first introduce background on neural networks and neural network based NER models before providing detailed implementation by spaCy's NER~\cite{entityrecognizer}. We then define our threat model and motivate our study. 
%

\subsection{Neural Networks}
A Neural Network (NN) model seeks to learn a function $f_\theta$ through training data, where ${\theta}$ represents the parameters of the model~\cite{bengio2017deep}. 
During training, data is passed over the model multiple times to learn $theta$ for high prediction accuracy. These iterations are relevant in the context of information leakage and memorization as some examples may be seen by the model multiple times despite only appearing once in the training set. This is a potential factor in the model memorizing data points as we shall demonstrate in Section~\ref{sec:MIA_memorization}.

\subsection{Neural Network based Named Entity Recognition} \label{nn_based_lm}
Named Entity Recognition (NER) is a natural language processing task that can automatically identify named entities in a text and classify them into predefined categories~\cite{nadeau2007survey}. The identified entities can then be used in downstream applications such as information retrieval, relation extraction, and de-identification~\cite{yadav2019survey,dernoncourt2017identification}.
Recently, neural network based NER systems have become popular and have achieved state-of-the-art performance~\cite{yadav2019survey, lample2016neural, ma2016end}. Various neural network architectures for NER have been proposed including feed-forward NN, Long-Short Term Memory (LSTM) network, bi-directional LSTM network~\cite{Chiu2016NamedER}, Convolutional Neural Network (CNN) or combination of these~\cite{Chiu2016NamedER, ge2020fedner}. 

Word embeddings are used in NN based NER models to transform a raw textual representation of the input space into a low-dimensional vector representation. Specifically, word embeddings are look-up tables that map each word from a vocabulary to a vector~\cite{song2020information}. Popular word embedding models include Word2Vec~\cite{bojanowski2017enriching}, GloVe~\cite{pennington2014glove} and FastText~\cite{mikolov2013distributed}. In addition, character-level embedding is also used to represent a word~\cite{ge2020fedner,Chiu2016NamedER}. Particularly, each character in a word is first assigned a character embedding of a fixed dimension. Embeddings of all characters in the words are given to a neural network such as a CNN which can learn the contextual representation of each character and output a character-level embedding of the given word. The character-level embedding is usually concatenated with the word-level embedding to form a word representation to be fed into the neural network.

\begin{example}
Consider the sentence ``John Doe lives in the United States,'' given as input to a trained NER model. An example (simplified) output of the model looks like: 
\begin{quote}
    ``John Doe [PERSON, score] lives [O, score] in [O, score] the [O, score] United States [LOCATION, score].'' 
\end{quote}
Here PERSON, LOCATION and O are pre-defined labels, where O stands for ``OUT,'' i.e., a label not in the set $L$ of pre-defined labels. The label assigned is the one with the highest probability score, a real number between 0 and 1 indicated above as ``score.'' The name John Doe is a multi-token entity, i.e., John is the first word in the entity indicated by the prefix ``B'' while Doe is the last, indicated by the prefix ``L'' (these prefixes are not shown in the above example). NER systems use tagging schemes such as the BILOU (Begin Inside Last Outside Unigram). For a document redaction application, we might then want to remove all instances of the entity PERSON. 
\end{example}

\subsection{spaCy's NER model}
\label{ner_spacy}


spaCy\footnote{Unless otherwise specified we refer to spaCy Version 3.0.3} builds a transition-based parser which can apply to both NER and dependency parsing tasks. The transition-based parsing is an approach to structured prediction where the task of predicting the structure is mapped to a series of state transitions~\cite{lample2016neural, kuhlmann-etal-2011-dynamic}. The underlying neural network state prediction model consists of either two or three
subnetworks: \textit{tok2vec}, \textit{lower} and \textit{upper} (optional). The \textit{tok2vec} subnet uses hash embedding with sub-word features and a CNN with layer-normalized maxout to map each word into a vector representation. The \textit{lower} subnet constructs a feature-specific vector for each (token, feature) pair. The state representation is constructed by summing the component features and applying the non-linearity. The \textit{upper} subnet is an optional feed-forward network that predicts the action scores from the state representation. If it is not present, the output from the \textit{lower} subnet is used as action scores directly~\cite{spacyparser}.


Four important features of the word are used to form its representation, which we call \emph{sub-word features}: \emph{norm}, \emph{prefix}, \emph{suffix} and \emph{shape}. The norm feature defines a normalised form of the word, i.e., normalising words with different spellings to one common spelling. At the very least, the norm contains a subsequence of the lower case form of the word. The prefix feature is defined as the first $n$ characters of the word where default value of $n$ is 1. The suffix feature is defined as the last $m$ characters of the word with default value of $m$ being 3. Both prefix and suffix are case-sensitive. The `shape' forms the orthographic feature of the word where lower-case, upper-case alphabetic characters and numeric characters are replaced by `x', `X' and `d' respectively, and special symbols are kept intact. It is noted that sequences of the same character shape (X, x or d) are truncated after length 4. Therefore, two words with different length may end up with the same shape. For example, two words \textit{Threats} and \textit{Embed} have the same shape of `Xxxxx'. The interested reader can refer to Appendix~\ref{app:equiv-proof}, to see 
when two words having the same set of features may in fact be different.

Figure~\ref{fig:spacy_ner_model} illustrates the architecture of the NER model in spaCy in detail. As visualized in the shaded green area in Figure~\ref{fig:spacy_ner_model}, the \textit{tok2vec} subnet first separately embeds the norm, prefix, suffix, and shape features of each word using hash embeddings. Each feature is assigned a unique vector. Then, vector representations of all features are concatenated together to get a word vector, ${e}_i = [e_i^{\text{prefix}}|e_i^{\text{suffix}}|e_i^{\text{norm}}|e_i^{\text{shape}}]$. 
If pre-trained vector table including pre-trained word embeddings is available, a static vector of the given word is also concatenated to the  word vector. The concatenated word vector is then passed to a \textit{max-out} layer to capture a context independent representation of each word. If there are $n$ words (tokens) in the input text as $(w_1, w_2,\ldots , w_n)$, then they are converted into $n$ embeddings as $(e_1, e_2, \ldots , e_n)$. 
In the next step in the \textit{tok2vec}, as illustrated in the shaded blue area in Figure~\ref{fig:spacy_ner_model}, the context independent vector representation $e_i$ of each word $w_i$ is passed to the encoding layer which consists of a CNN model, that takes a \emph{window size} as a hyperparameter. Through the window size, which by default is 1, spaCy captures the context around the given word. For instance, a window size of 1 captures the context of one word in each direction (i.e., $e_{i-1}, e_{i+1}$). Depending on the depth of the CNN (the number of convolutional layers), which is by default 3, the layer assigns a context dependent vector to each token in the text as $(c_1, c_2, c_3, \ldots , c_n)$. The receptive field of the CNN is defined as $\text{depth} \times ( \text{window-size} \times 2 + 1 )$.
With $\text{depth} = 3$, and $\text{window-size} = 1$, the network would be sensitive to 9 words at a time.


\begin{figure}
\includegraphics[width=\linewidth]{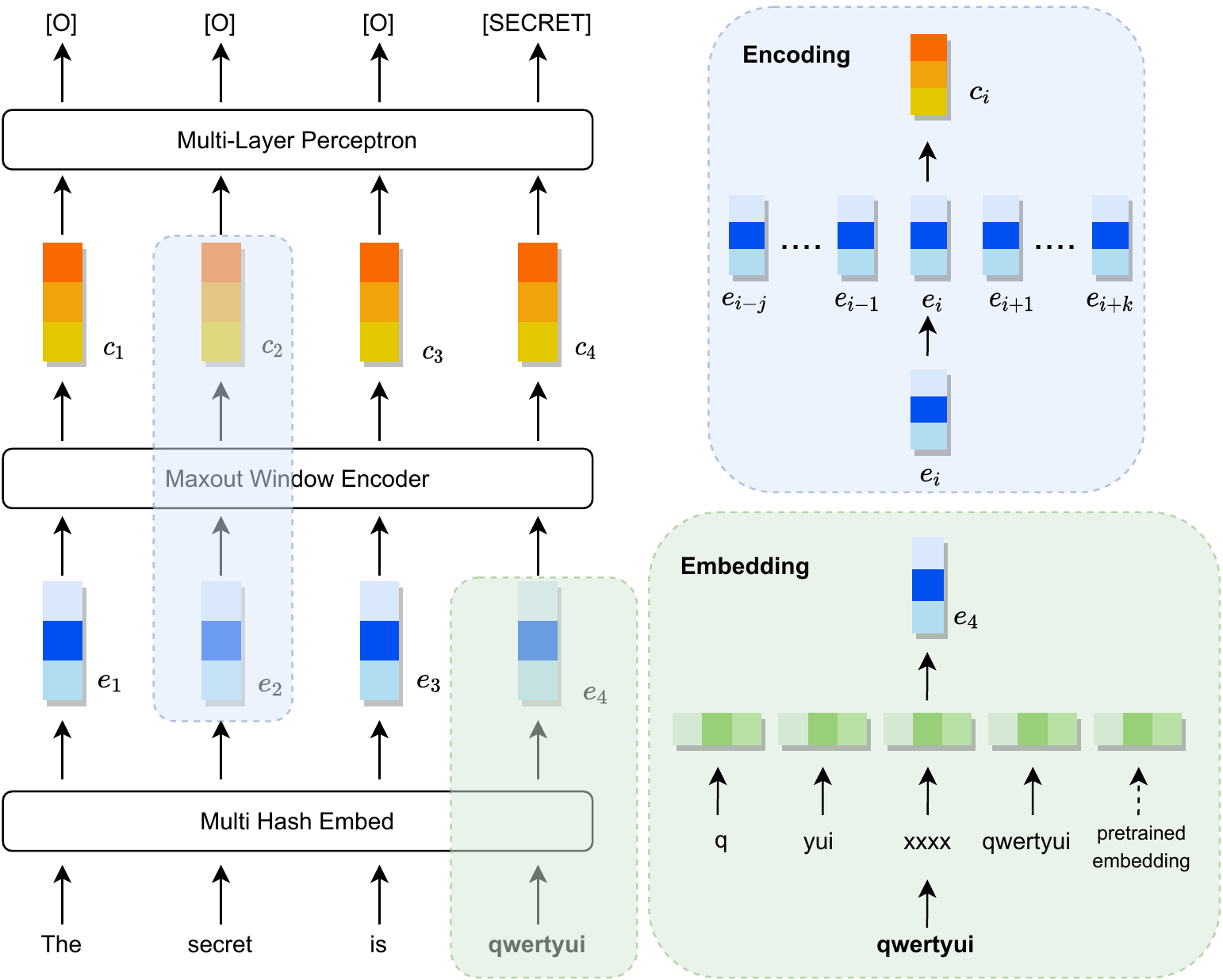}\label{fig:spacy}
\caption{SpaCy's NER Model. An input sentence is broken up into words which are individually processed into an embedding (shaded green). This embedding is then converted into a contextual Encoding, by combining the current word embedding with the embeddings of surrounding sentence words (shaded blue), before being processed by the Multi-Layer Perceptron.}
\label{fig:spacy_ner_model}
\end{figure}

For assigning labels to tokens, the NER model is based on a transition based parser which was inspired by the arc-standard parser~\cite{nivre2004incrementality}. As in NER tasks neighboring labels have some sort of relation with each other, a transition based approach takes into consideration the learned embedding of the label type from previously identified labels and uses that for the next prediction. With this approach the word that was assigned a label even far back in the document would also be considered when making a prediction for the current word~\cite{lample2016neural}.

The training data given to (re-)train the spaCy NER model is a collection of documents. Each document 
is annotated as a sequence of words $x = (w_1, w_2, \ldots, w_n)$, together with one or more tuples of the form $(i, j, l)$, where $1 \le i \le j \le n$ are indexes and $l$ is a label (named entity) from a pre-defined set of labels $\mathcal{L}$. The tuple $(i, j, l)$ indicates that the subsequence of words $(w_i, \ldots, w_j)$ defines the entity $l$. The pre-trained NER model in spaCy includes 18 labels for pre-defined  named entities and uses BILOU tagging system. 
A list of transition names are defined based on the pre-defined named entities and the tagging system, i.e. number of transition moves in spaCy NER is $18 \times 4 + 1 = 73.$
\subsection{Threat Model}
\label{threat_model}
An NER model can be updated with new samples to identify new named entities (labels). For example, one can update the NER model to enable it to recognise a new named entity `SECRET' by training the model with a private dataset which includes sensitive information such as passwords, emails, phone numbers, and social security numbers. The updating process adds new labels to the initial NER model. We call this the \textit{updated NER} model in the rest of the paper, and the dataset used to obtain the updated NER model as the \emph{private dataset}. The adversary aims to infer information about this private dataset.  

Following the real-world use of NLP models, which often come pre-trained on a large but public corpus, we assume that the pre-trained NER model and its underlying training dataset is public. This also naturally fits our spaCy use case since it is available for download as a pre-trained NER model, which can then be further updated for other applications such as labeling sensitive information. We remark that this assumption is simply to make our setting more natural, as our attacks do not rely on the adversary knowing the public training dataset, and could be applied even if the model was trained from scratch.

We consider an adversary with the following capabilities and knowledge: 

\begin{enumerate}

    \item[A1:] The adversary has black-box access to the updated NER model. 
    \item[A2:] Samples of the (annotated) sensitive information are only introduced in the private dataset (model update).
    \item[A3:] The adversary knows the sentences surrounding sensitive words in the private dataset (However, see explanation below). 
\end{enumerate}

\descr{Justification of Assumptions.} Assumption A1 is the usual black-box setting for membership inference attacks. The adversary can send queries and obtains output from the updated NER model. A query is simply some text to be labelled, and the output is a probability score for each named entity in the input text.

Under Assumption A2, we assume that the private dataset used in the updated NER model contains at least one example of the new sensitive named entity. This example consists of the sensitive information together with the \emph{surrounding sentence}. For instance, the private dataset could be the solitary phrase: ``Alice’s secret is quertyui.'' Here the password ``quertyui'' is an example of the new named entity SECRET. The rest of the phrase constitutes the surrounding sentence. See Section~\ref{sub:method} for further details. We further assume that such examples do not exist in the public training dataset. Indeed, the service provider's goal is to update the NER model for new exclusive services. 



Assumption A3 states that the attacker additionally knows the surrounding sentence, but not the actual secret, which the attacker would like to infer. This assumption is needed for membership inference. For our main use case of redacting information, the surrounding sentence is not considered sensitive, and thus it can be safely assumed that the attacker would know such text through previously released documents of similar nature. Furthermore, in the same use case, often the surrounding sentence is in a standard format which is easily guessable. For instance, automatic emails with password resets would include some standard text such as ``Your new password is $\ldots$.'' We remark that this assumption is used for the membership inference attack on the NN model, and not the timing-based attack. Nevertheless, we also specifically examine the impact of updating the NER model with different phrases (variants of the surrounding sentence) containing the new named entity in Section~\ref{sub:unknown-surround}. Note that unknown surrounding sentence variant of the attack makes it more of an attribute inference attack~\cite{yeom2018privacy, zhao2021feasibility} rather than a membership inference attack.

\begin{figure}
\includegraphics[width=\linewidth]{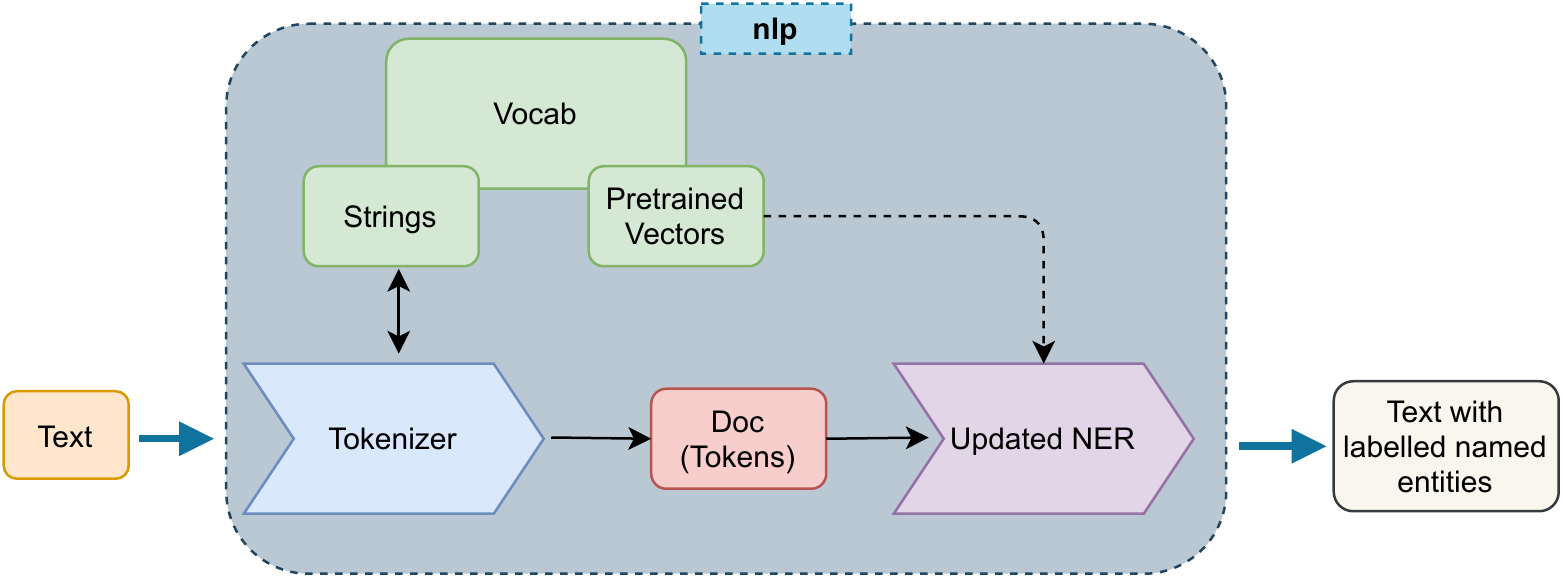} \caption{The end to end pipeline of an updated NER model in spaCy. Under a black-box setting, the attacker can query the model by passing text to the \textit{nlp} object, getting labels with their confidence scores.}
\label{fig:spacy-updated-ner}
\end{figure}

Figure~\ref{fig:spacy-updated-ner} shows the end to end pipeline for querying the updated NER model with a given text in spaCy. The text is first segmented into tokens by a \textit{Tokenizer} which applies rules specific to each language. 
When processing a token, the \textit{Tokenizer} accesses the \textit{Strings} table in the vocabulary (\textit{Vocab}) and checks whether four features of the token including prefix, suffix, shape and norm (lower form) are in the \textit{Strings}. If any feature is not $in$, it is added to the \textit{Strings}. The output of the \textit{Tokenizer} is a \textit{Doc} object which is a sequence of tokens. The updated NER model then processes the \textit{Doc} object and returns the named entities in the input text with probability scores. Note that the updated NER model will take into account the pre-trained word embeddings in the \textit{Pre-trained Vectors} when processing the \textit{Doc} if it is available in the \textit{Vocab}. 


\subsection{Motivation}

The volume of intellectual property generated by small, medium and large enterprises is increasing at a growing rate, as reflected by sustained growth in the document management services industry~\cite{docu_manage_FBI_2020}. Particularly in industries like finance~\cite{SEC_finance_retention} and Health sector~\cite{AHIMA_med_retention}, where regulations dictate records be held for a minimum amount of time. Coupled with a sense of cyber-awareness, companies may seek to sanitize their documents, removing any sensitive information that may incur regulatory retaliation, or disruption to business in the event of a breach. However, with such volume of data, it increasingly drains human resources to sort and redact sensitive information. This is where automated text redaction services are appealing, with the opportunity to process sensitive documents at scale. We have begun to observe innovation in the space of text redaction from the traditional rule based systems~\cite{pearl2016data,ibm_redact} to the use of AI-based systems~\cite{csi_intellidact,mphasis_airedact,aws_airedact}, particularly NER, in pursuit for higher accuracy and robustness to multiple manifestations of sensitive information. The appeal of NER is allowing companies to personalize the model for their own use case.

This personalization, on the flip side, has the potential for unintended memorization of the training examples as previously reported for neural networks in general~\cite{shokri2017membership}.
This memorization is particularly true when dealing with data that is not considered ``standard'' personal identifiable information (PII), like passwords. Passwords may be communicated and/or recorded in technical documentation. Due to their sensitive nature, there is a desire to redact this information.

Consider the scenario whereby a document management company adds an additional entity for recognition and redaction using labelled data they have on hand, including these passwords. The company would have had obtained and labeled documents, to update a base model provided as a pre-trained model in an offline or online setting to ensure compatibility with their system.

This updated NER model may then be provided as a service to complement their core business of document management. This can be done by providing API access to the NER model for their customers to perform text redaction query by query, or to interface directly with their customer systems. An attacker with access to the same API may be able to infer the passwords used to train the updated NER model (for the updated entities), via a membership inference attack~\cite{shokri2017membership,yeom2018privacy,salem2019ml,zhao2021feasibility}. The attacker with possession of a dictionary of candidate passwords may be able to infer which passwords were used to update the NER model if it is susceptible to membership inference attacks. 

\section{Membership Inference via Memorization}
\label{sec:MIA_memorization}


Our first attack, which is the subject of this section, targets the updated NN model (see Figure~\ref{fig:spacy-updated-ner}). More specifically, we demonstrate that the updated NN model memorizes information from the private dataset which could be inferred by querying the updated NER model. We essentially use the observation from previous membership inference attacks such as~\cite{shokri2017membership, salem2019ml, carlini2019secret} that machine learning models are more confident in predicting inputs used in training versus unseen inputs. Translated to the NER case, this means text used in training (i.e., from the target dataset) is likely to be given higher scores than text never seen by the NER model. In the following, we first demonstrate the feasibility of this attack using simple phrases containing different secrets, e.g., passwords and credit card numbers, within the training dataset to update the NER model. Section~\ref{sec:doc-redact} broadens the attack to cover sensitive information in the more practical scenario of document redaction. We emphasize the exploratory nature of this section as we delve into the extent of, and reasons behind, memorization in NER models; hence the use of simple phrases. Section~\ref{sec:doc-redact} considers more realistic scenarios where the target secret within one of the phrases is interspersed with other phrases in a large private dataset. Thus, the instances of memorization in this section are not just due to the small size of the private dataset. 



\subsection{Methodology}
\label{sub:method}

We first update the NER model with a single phrase containing some secret information, e.g., ``Alice's secret is qwertyui.'' Specifically, with formalization from \cite{carlini2019secret} demonstrating memorization in text generative models, we construct sentences of the form: $s[r] = \text{``Alice's secret is } r\text{''}$. Here $r$ is a variable representing members from the secret space $\mathcal{R}$. For example, $\mathcal{R}$ can be a dictionary of passwords, with an example phrase $s[ \text{``quertyui''}] = \text{``Alice's secret is quertyui''}$. The remaining text in $s[r]$, i.e., minus $r$, is what we term the \emph{surrounding sentence}. 
The updated NER model is then made available as a black-box service to customers to identifying and redacting secrets in their own documents. 
For instance, given an input ``Bob's secret is 123456'' the NER will label ``Bob'' as a \textit{PERSON} and ``123456'' as the \textit{SECRET}. The user can then choose to redact all \textit{SECRET} label words, resulting in the redacted text ``Bob's secret is $\blacksquare$.'' We assume the attacker can query the updated NER model with secret $r$ and phrases $s[r]$ chosen from a secret space $\mathcal{R}$, and any surrounding sentence.
We also permit the surrounding sentence to be empty, whereby the attacker queries directly with $r$. 
We call the secret the attacker attempts to extract from the private dataset used to train the updated NER model, the \emph{target} secret. The security issue here is the attacker is able to associate a (redacted) password to a user, even though the password is already in the attacker's dictionary.  

\subsection{Evaluation Metrics}
\label{sec:eval-metrics}
\descr{Rank.} To evaluate the success of the attack, we rank the scores returned by the NER model for each phrase $s[r]$. Note that the surrounding sentence of the phrase remains fixed, and only change $r$. With lower target secret ranks, the attack is more successful. Specifically, let $\Pr(s[r])$ denote the probability score assigned to secret $r \in \mathcal{R}$ by the NER model when queried with text $s[r]$. Then, the (normalized) rank of the secret $r$, is:
\[
\textbf{rank}(r) =  |\{r' \in \mathcal{R} : \Pr(s[r'])\geq \Pr(s[r])\}|/|\mathcal{R}|,
\]
where $r'$ are the other possible secrets in the search space $\mathcal{R}$. The rank lies in the range [0, 1], where $0$ means no other secret in $\mathcal{R}$ has higher (probability) score, and $1$ means the probability score is lowest among all $r \in \mathcal{R}$. 

\descr{Password strength (entropy).} 
In addition, for the specific case when the secret is passwords, we also use the \texttt{zxcvbn} password strength estimator~\cite{wheeler2016zxcvbn}, as a proxy for entropy of passwords to evaluate attack success as a function of password strength. 
\texttt{zxcvbn} estimates password entropy by matching password sub-parts with known patterns, e.g., words, sequences, and dates, scoring the matched patterns and using the entropy to further estimate the number of guesses an attacker would need to crack the password.
\texttt{zxcvbn} defines password strength on a five level scale $[0-4]$ increasing from 0 to 4, to define categories for the number of guesses required to crack the password in orders of magnitude. Specifically, strength scores 0, 1, 2, 3 are assigned to passwords that require less than $10^3$, $10^6$, $10^8$, $10^{10}$ guesses respectively, whereas strength score 4 is assigned to those that require at least $10^{10}$ guesses.

\descr{Levenshtein distance.} When evaluating similar secrets (words) with potentially similar ranks, we use the Levenshtein distance $L$ as a distance metric. We define $L$ for each of the four word features: $L_{\text{prefix}}$, $L_{\text{suffix}}$, $L_{\text{shape}}$ and $L_{\text{norm}}$, and define the feature distance (\textbf{f-distance}), as the sum of the the aforementioned $L$. Thus, the feature distance of two secrets $r, r' \in \mathcal{R}$ is defined as
\begin{align}
    \textbf{f-distance}(r, r') &= L_{\text{prefix}}(r, r') + L_{\text{suffix}}(r, r') \nonumber\\
    &+ L_{\text{shape}}(r, r') + L_{\text{norm}}(r, r') \label{eq:f-dist}
\end{align}

\subsection{Experimental Results}

To obtain an accurate measure of the rank, we construct $N$ different phrases $s[r]$ each containing a different random target secret $r \in \mathcal{R}$ (but the same surrounding sentence), and update the NER model with only the phrase $s[r]$. The results are averaged over 48 runs for each secret. 

\subsubsection{Effect of Repeated Insertions}
We use $N = 10$ different target passwords chosen randomly from a space of $|\mathcal{R}| = 2000$ passwords, which in turn are picked randomly from a public password dictionary with one million passwords~\cite{github_passwordlist}.
We then measure the rank of each target password averaged over 48 runs. We vary the number of insertions of the phrase $s[r]$ from 1 to 4. The average (normalized) ranks of the target passwords as a function of the number of epochs used to update the NN model are shown in Figure~\ref{fig:memorization-spacy3}. After less than 10 epochs even with a single insertion, the model memorizes the target password (rank of 0). However, we note that there is a minor increase in the memorization when the password is inserted more frequently, but there are diminishing returns. {Additional insertions can be achieved with increased batch size as the likelihood that training samples with a password (sensitive information) is sampled and propagated through the network for training increases, resulting in more memorization.}

\begin{figure}[t]
    \centering
    \includegraphics[width=0.99\linewidth]{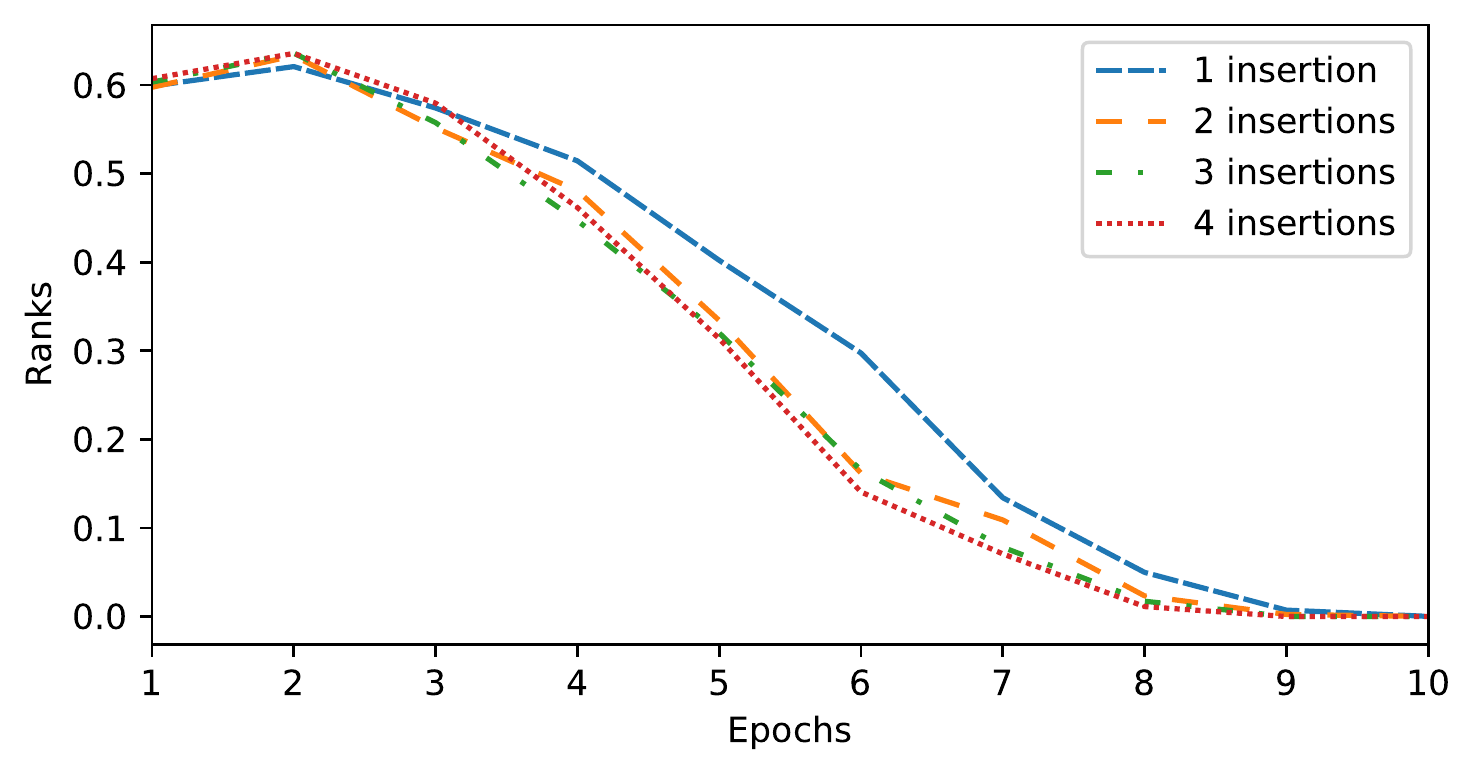}
    \caption{The extent of model memorization for out-of-distribution words in spaCy (V3.0.3). ``Insertions'' refer to the number of times the same phrase is repeated in the training set, ``Epochs'' is the number of training iterations the model undergoes during the update process.}
    \label{fig:memorization-spacy3}
\end{figure}

\subsubsection{Effect of Password Strength}
It is interesting to assess whether memorization is dependent on the strength of the passwords, i.e., high entropy passwords due to being more peculiar than the rest of the words in the corpus might make them more likely to be memorised by the NN model. Indeed, it has previously been shown that words that are \emph{out of distribution} tend to be memorized more than generic words in the case of NN models for text generation~\cite{carlini2019secret}. 
To check this, we used the \texttt{zxcvbn} password strength estimator~\cite{wheeler2016zxcvbn} as described in Section~\ref{sec:eval-metrics}. We randomly choose $N = 10$ passwords for each \texttt{zxcvbn} strength level (0 to 4 inclusive), and insert a single phrase $s[r]$ inside the Georgetown University Multilayer (GUM) corpus~\cite{zeldes2017gum} (See Section~\ref{subsubsec:secret_types}) for each of the passwords. These passwords are once again chosen from a list of 2000 randomly chosen passwords $\mathcal{R}$, which itself is a random sample from a public password dictionary with one million passwords. For each \texttt{zxcvbn} strength level, we average the results and investigate the memorization of the model based on the average rank. Based on the results shown in Figure~\ref{fig:zxcvbn_correlation} we observe that complex passwords with a \texttt{zxcvbn} level $4$ take more time to memorize as compared to passwords with lesser \texttt{zxcvbn} level, i.e., $0$ to $3$. There is not much difference, in terms of the rate of memorization, between the first four password strength levels. Moreover, even the strongest passwords (level 4) are eventually ranked 0 by the NN after about 40 epochs. 
%
%
%
{We hypothesize that the passwords contained in strength level 4 are longer, with more internal variation which is not already captured by tokenization. Specifically, the tokenizer may capture general password structures, initially decreasing the rank to ~0.1, but is unable to distinguish between passwords with similar complex structures, thereby requiring additional epochs to memorize the specific digits of the target password.}
%
%

\begin{figure}[t]
\centering
\includegraphics[width=0.99\linewidth]{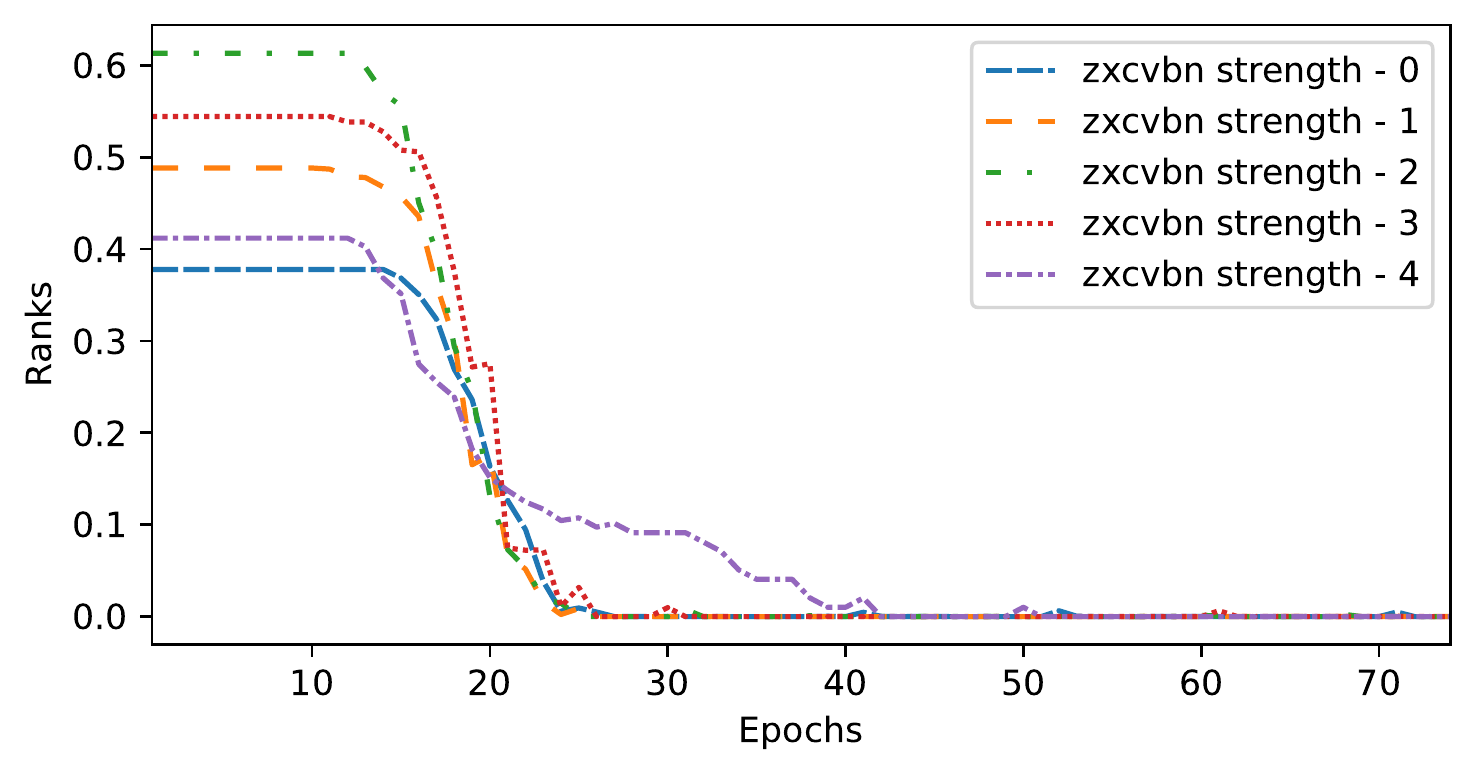}\label{fig:strength of password}
\caption{It is observed that a higher password strength (as measured by \texttt{zxcvbn}) will take more training epochs before memorization by the NER model as indicated by the near zero normalized rank of candidate passwords.}
\label{fig:zxcvbn_correlation}
\end{figure}

\subsubsection{Effect of Private Dataset Size and Secret Types}
\label{subsubsec:secret_types}
While the above serves as an illustration of the feasibility of the attack, the update process, i.e., using only a single phrase to update the NER model, is arguably not realistic. Furthermore,
we would like to assess the results for secrets other than passwords. To make the training process more realistic, we update the NER model with the Georgetown University Multilayer (GUM) NER corpus which contains around 2,500 annotated documents~\cite{zeldes2017gum}. We include a single secret phrase $s[r]$ into one of these documents. We once again use $N = 10$, and $|\mathcal{R}| = 2000$. We experiment with four different types of secrets: passwords, credit card numbers, phone numbers and IP addresses. 

We generate credit card numbers using Luhn's Mod10 algorithm~\cite{li2011validation} and include a single phrase $s[r]$ in the GUM corpus, which now includes a random credit number as the secret. We calculate the rank the same way as with the passwords by comparing it with 2000 other randomly generated credit card numbers.  
IP addresses are generated by first generating 4 random integers including and between 0 and 255, these numbers are then concatenated together with a ``.'' separating each number (e.g. ``1.10.100.200''). Next, for phone numbers, we randomly generate sequences that follow the following format, ``ddd-ddd-dddd'' whereby each ``d'' is a digit. Special phone numbers as described by the North American Numbering Plan~\cite{nanp} were excluded during the generation process. 

The use of a larger private dataset (GUM corpus), gives us the measure of performance of the NER model, with a training loss, and accuracy via micro-averaged F1 score  (over all ground truth labelled entity word spans and predicted entity spans and entity type labels). The higher the F1-score, the less likely the model is to be overfit to the training dataset. The results in Figure~\ref{fig:rank_wrt_epoch_gum} for all four secret types, show that after a few training epochs, the phrase is memorized regardless of the secret type(in terms of ranks). Comparing Figure~\ref{fig:rank_wrt_epoch_gum} with Figure~\ref{fig:memorization-spacy3} (single password phrase insertion) we see the influence of the larger private dataset, as the speed at which the model generalizes beyond the specific password is delayed due to additional documents for training. This indicates that the model needs to minimize loss on more documents than just the single phrase of Figure~\ref{fig:memorization-spacy3}.
Note that the F1 score on GUM test data (Figure~\ref{fig:rank_wrt_epoch_gum}) increases steadily, indicating that the model is not overfitting. The model would typically be trained till a test F1 score around 0.6~\cite{gessler2020amalgum, hou2019few}. 
Note that a random model would have F1 approximately zero as almost all spans would be incorrect. 

\begin{figure}[t]
\centering
\includegraphics[width=0.99\linewidth]{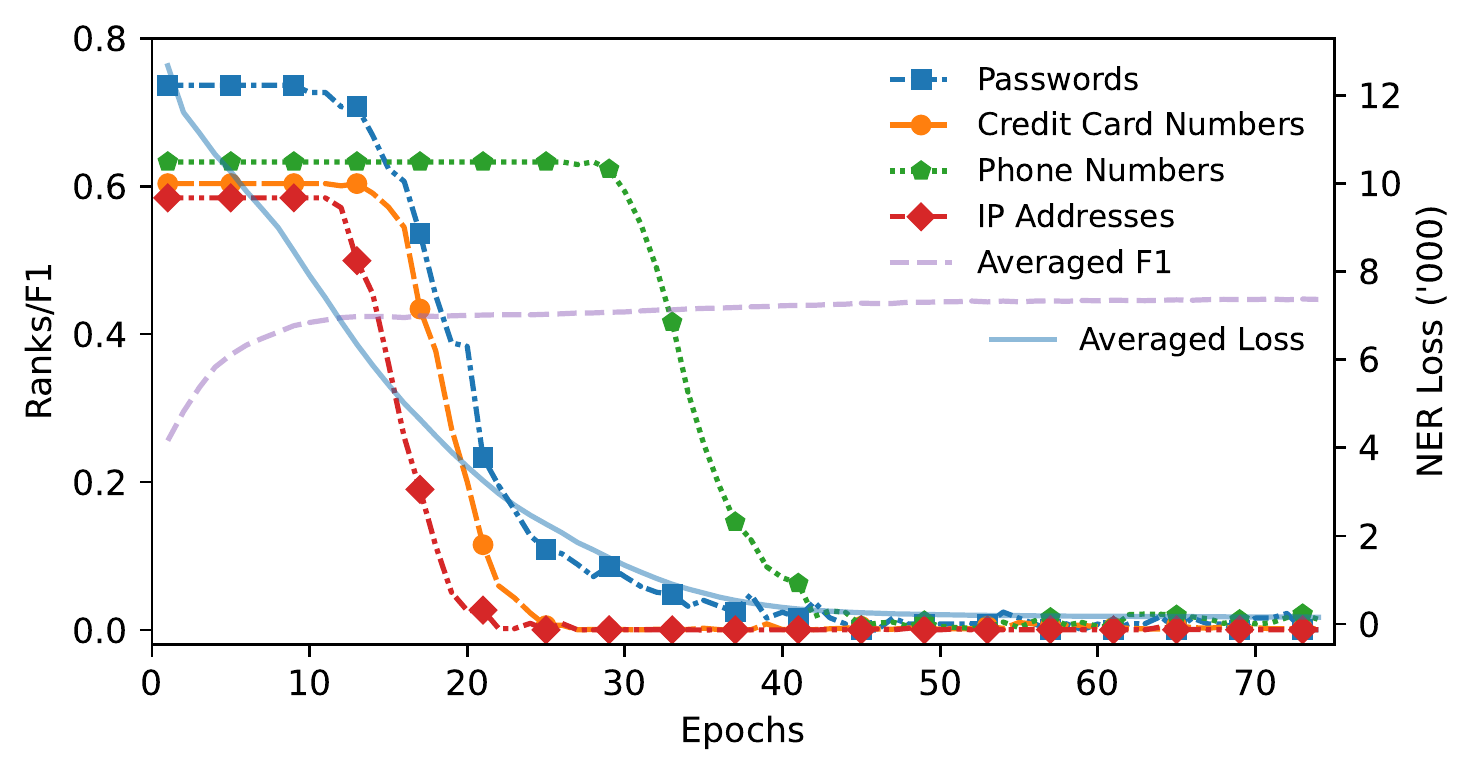}
\caption{With just a single phrase in a training set of 2500 examples, we demonstrate the phrase is still memorized over a short number of training epochs. We see a similar behaviour for different types of secret. Note that the loss line in the graph is overlapping for each of the secret type.}
\label{fig:rank_wrt_epoch_gum}
\end{figure}

\subsubsection{Effect of Similar Secrets}
The analysis in the preceding sections illustrates that if the target secret is within the set of secrets $\mathcal{R}$ available to the attacker, then it is possible to infer the secret via membership inference, i.e., the target secret being of rank 0. In a more realistic scenario, however, the target secret might not be in the set $\mathcal{R}$.
In this case, what does the highest rank secret in $\mathcal{R}$ tell us about the target secret? If the highest rank secret is close to the target secret, then the attacker has at least come close to the target secret. The attacker could then be able to infer the target secret by making small modifications to the candidate with the highest rank in order to ``hill climb'' to the target secret, as we shall see in the next section. 

Taking the password experiment as an example case, we analyzed the model's behaviour for secrets that look similar to the target secret. We first chose $300,000$ initial passwords drawn randomly from the public set of 1M passwords. We then chose $N = 10$ passwords randomly from this initial set. For each of the $N$ passwords, we constructed $70,000$ additional passwords based on three of the four sub-word features (Section~\ref{ner_spacy}) excluding the norm feature (lower case form of the password) as follows: 
we generate $10,000$ passwords each with (1) the same prefix only, (2) same suffix only, (3) same shape only, (4) same prefix and suffix, (5) same prefix and shape, (6) same suffix and shape, and finally, (7) same prefix, suffix and shape. For (1) (2) and (4) we assume we know the length of the password and add random alphanumeric characters for the rest of the password. For (3) (5) (6) and (7) we use the shape and fill  in missing characters using random alphanumeric characters depending on the characters defined the shape of the target password. These 7 different combinations of sub-word features result in $70,000$ passwords for each of the $N = 10$ passwords, resulting in a total of 700,000 additional passwords. Note that while generating these passwords, we assume that the length of the target password is known. We call these the \emph{feature passwords}, and denote the set by $S$. In total, including the 300,000 initial passwords, we have 1,000,000 passwords in our secret space $\mathcal{R}$. To calculate distance between passwords, we use the \textbf{f-distance} metric as defined in Eq.~\ref{eq:f-dist}. The reason for generating feature passwords is to have more passwords closer to the target password in terms of \textbf{f-distance}.

We then construct a phrase $s[r]$ for each of the $N$ passwords, update the NER model with this phrase only, and then query the updated NER model with all passwords in $\mathcal{R}$, and record the rank. We update the NER model over 75 epochs, and repeat this procedure 48 times to get an average rank for all passwords. We then normalize the rank between 0 and 1 and sort the results according to the \textbf{f-distance} from the target password. An \textbf{f-distance} of 0 means that it is the target password. As we can see in Figure~\ref{fig:feature_distance_ranks}, the average rank increases as we increase \textbf{f-distance}. This means that secrets (passwords) that are closer to the target secret (password) are ranked highest (closer to 0.0), and secrets further away from the target secret are ranked lower (closer to 1.0).  


\begin{figure}
\includegraphics[width=\linewidth]{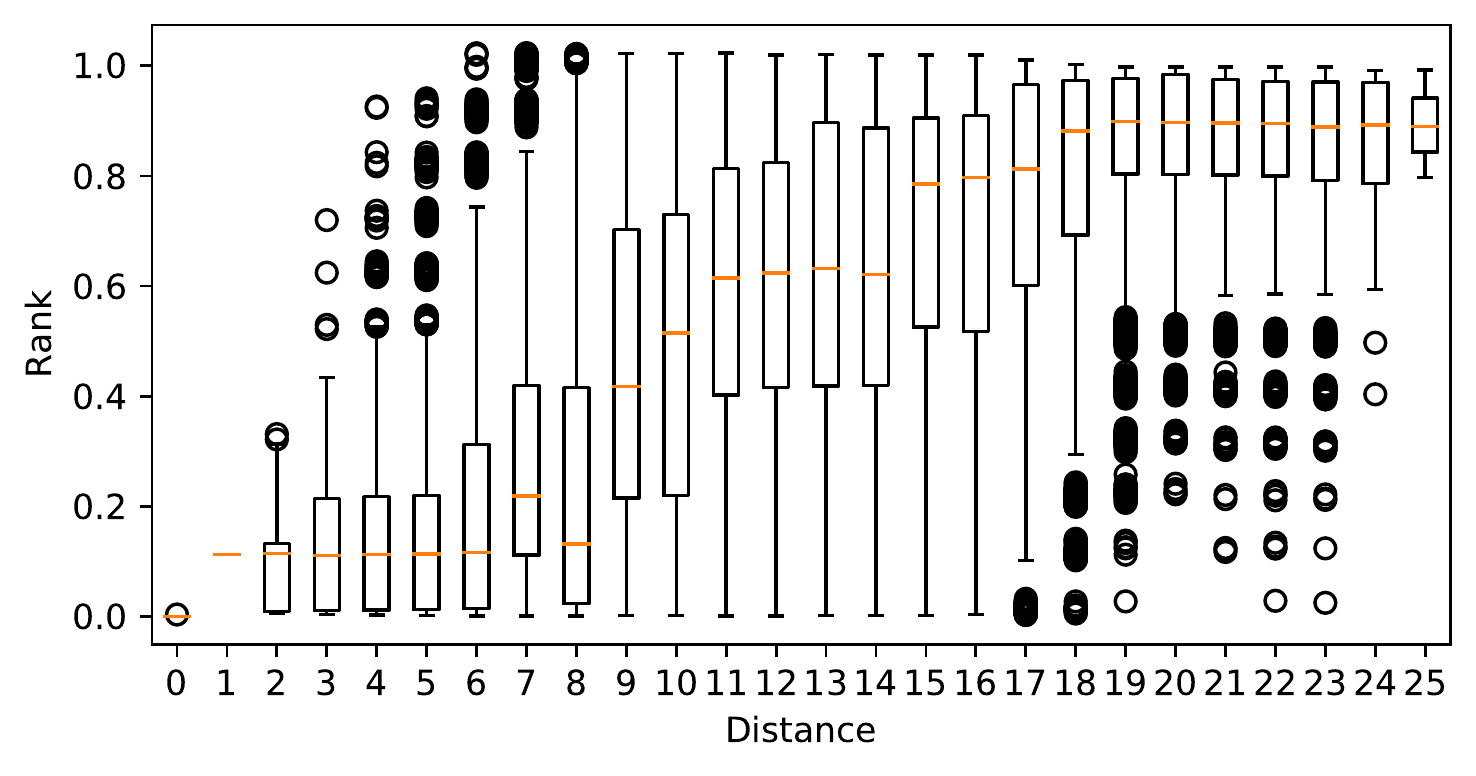}
\caption{The correlation between the ranks of the words with the \textbf{f-distance} they are away from the target word. }
\label{fig:feature_distance_ranks}
\end{figure}

\section{The Document Redaction Use-Case}
\label{sec:doc-redact}
We now turn to demonstrating our attack on a more practical use-case. We assume a service provider that builds on top of spaCy's NER model to provide an online document redaction service. The service provider uses a medical dataset containing sensitive information such as patient IDs, contact information and place of residence. The NER model is updated by training on this dataset to learn new labels specific to the application of medical data redaction. We call this the \emph{updating dataset}. These labels include ID, CONTACT and LOCATION. The updated NER model can then be queried via an API which customers can use to redact information within their documents. We assume an attacker who has access to the same API, and wishes to infer sensitive information (related to the labels) contained in the updating dataset.

\subsection{The i2b2/UTHealth Corpus}
\label{subsec:MIA_medical_dataset}
We update the pre-trained spaCy NER model with the i2b2/UTHealth challenge dataset~\cite{stubbs2015annotating} which contains patient medical reports. These documents contain sensitive identifiers like names, IDs, contact, and location. We identify three sensitive labels (entities): ID, CONTACT and LOCATION for our experiments. We define a \emph{sample} for a label $l$ as any subsequence of words with the true label $l$ within these reports together with the surrounding sentence. We use the original training set (785 documents) and test set (511 documents) split from~\cite{stubbs2015annotating}. Samples from the training set are considered as members whereas those from the test set are non-members. Once again, we assume the attacker knows the surrounding sentences. Section~\ref{sub:unknown-surround} considers the case when the surrounding sentences may not be known by the attacker.

\descr{Attack Setup.} We use the membership inference attack proposed by Salem et al.~\cite{salem2019ml}, which uses the confidence score (probability score) returned by the model to infer membership. As mentioned before, the underlying assumption of this attack is that machine learning models may return a higher confidence score for training data points versus points seen by the model for the first time. In our case, the probability score is the score returned by spaCy when querying the model with a sample (of the three labels) from either the training or test dataset. We first train the model for 75 epochs with a batch size of 50 and evaluate the model's NER Test F1, and ROC Area Under the Curve (AUC) for membership inference after each epoch during training. Notice that since we have full training and test datasets, we can use the traditional AUC metric instead of the rank metric used in the previous section where we assessed membership inference of a single phrase.

\descr{Results.} Figure~\ref{fig:i2b2_mia} reports the AUC of the MI attack. Observe that there is a delay in the rise of the attack AUC for a few initial epochs due to the initial learning from the high number of samples for each entity in the training set. It is not until the test F1 score begins to plateau (training loss was also levelling off near 0) that we observe a sharp rise in the attack AUC above a random guess of 0.5. We also observe that the MI attack AUC for the ``LOCATION'' entity is higher than ``ID'' and ``CONTACT''. 
We hypothesize this is due to ``LOCATION'' words being linguistically common (and so the language model will contain knowledge of their semantics). This is also true, to a lesser extent of names in CONTACT entities. IDs, on the other hand is a string of numbers. Word embeddings are unlikely to include long numbers and thus unlikely to encode deeper semantics of numbers. However we note that the pre-trained NER model contains an existing ``LOCATION'' entity type. Contrary to our observation, one would expect a previously defined class in a pre-trained language model to already be sufficiently generalized to dissuade memorization. {The highest attack AUC is between 0.7 to 0.8, and even the lowest attack AUC is 0.6 (for the ID entity), well above the random guess mark of 0.5.} Therefore, we observe that MI succeeds in all cases. The relationship of MI accuracy between these entity classes is beyond the scope of this work.


Above we observed that ``ID'' and ``CONTACT'' had less documents and MI AUC. But to ensure a lower number of documents representing these entities in the training set is not responsible for the decreased MI AUC, we withheld documents containing ``CONTACT'' samples. Curiously, we observed the opposite effect for the same complexity. Specifically we performed an attempt by taking only documents where the ``CONTACT'' word was longer than 5 characters (yielding 94 samples), we refer to this as ``CONTACT-5''. By further restricting this set to 50, 30 and 10 documents, we also obtain configurations of ``CONTACT-5-50'', ``CONTACT-5-30'', ``CONTACT-5-10'' respectively. We observe with a decreasing number of available samples representing an entity, the attack AUC increases. Therefore, the less documents for an entity, the more the model will memorize the entity. Note that the F1-score approaching one indicates that in all cases the model is not prone to overfitting after around 10 epochs. We note that the F1 curves for all datasets is approximately the same, and thus have been averaged in Figure~\ref{fig:i2b2_mia}. The individual F1-scores of each experiment together with the detailed standard deviation bands are included in Figure~\ref{fig:full_i2b2_loss} of Appendix~\ref{app:full_docredact_loss}.




\begin{figure}[t]
\centering
\includegraphics[width=0.99\linewidth]{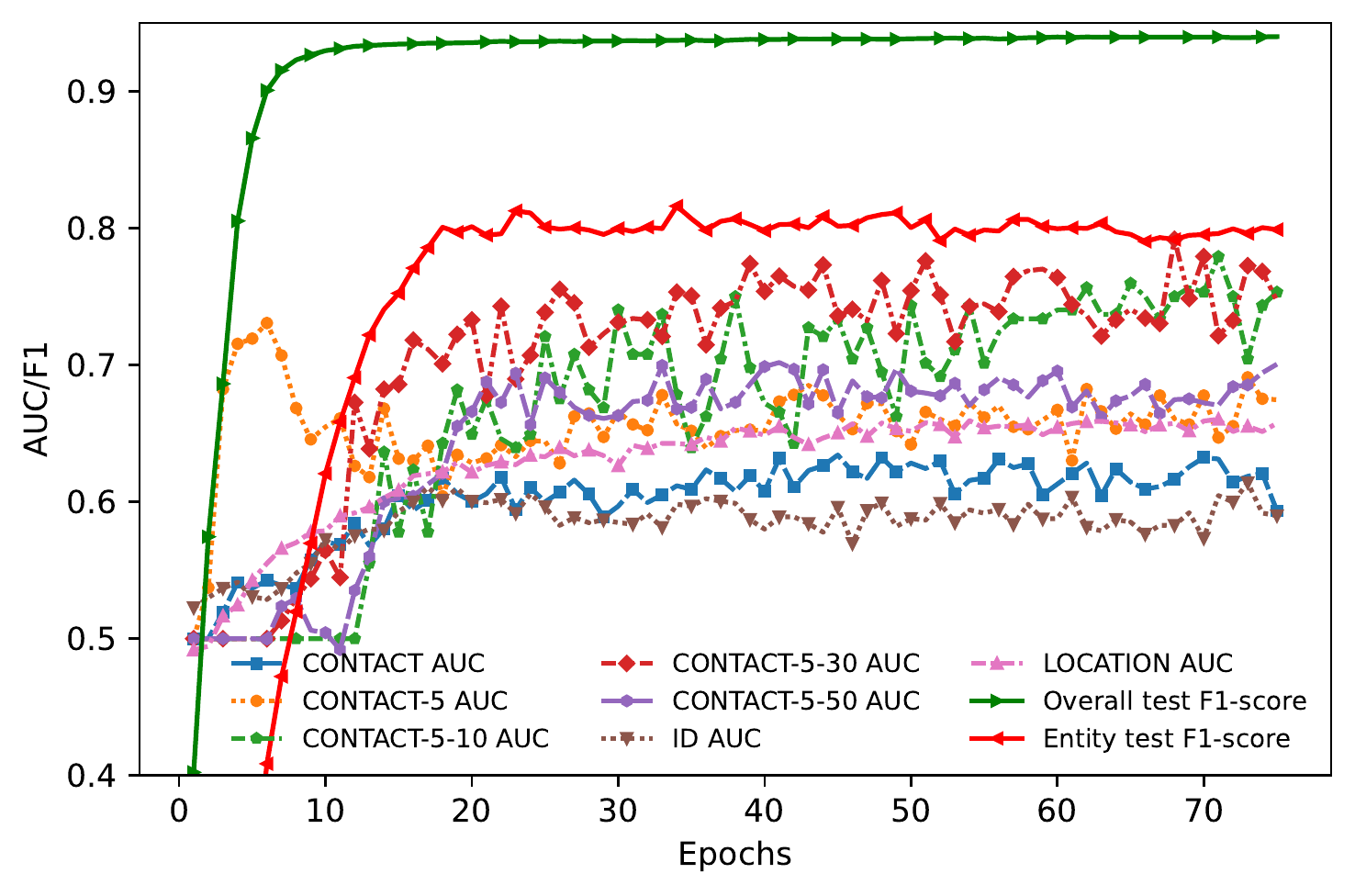}
\caption{{MI AUC reported over training time for different entities in the i2b2 dataset.}}
\label{fig:i2b2_mia}
\end{figure}





\subsection{Secret Redaction: Known Surrounding Sentences}
\label{subsubsec:i2b2_secret_redaction}
In Section~\ref{subsubsec:secret_types}, we observed that secrets such as passwords can still be inferred even if there are other labels in the private dataset with the example of the GUM corpus. To give more credence to this observation, we augment the i2b2/UTHealth dataset with phrases containing passwords. This mimics the scenario where the updated NER model is also responsible for redacting credentials for controlling access to medical records.
As before, we assume that the attacker is able to use the exact surrounding sentence on what may be a secret (password) to perform membership inference on a dictionary of candidate secrets.

To test this scenario, we sample two sets of 150 passwords to serve as our member and non-member set. We then craft a single phrase ``The secret phrase is ****'' to precede all passwords. The phrases with member passwords are then inserted into the i2b2/UTHealth training data to update the spaCy NER model. With both the member and non-member set, the AUC of MI is computed. This experiment was repeated reducing the passwords within the training set (150 to 100, 50 and 30). All experiments were repeated 10 times for an average result. We also report the testing F1-score to evaluate the performance of the trained model (and note that no overfitting has been observed).



\descr{Results.} From Figure~\ref{fig:same_phrases_main}, we observe that the MI attacker can successfully infer membership of passwords. Furthermore, the ability to infer passwords increases with smaller number of examples in the training dataset (SECRET-30 vs SECRET-150), which is consistent with the observation around the CONTACT label in Figure~\ref{fig:i2b2_mia}. We also observe that the F1-score for the whole model (on the medical dataset) as well as on the SECRET label is high (close to 1.0). This is in contrast with the GUM corpus (Section~\ref{subsubsec:secret_types}, Figure~\ref{fig:rank_wrt_epoch_gum}) where a high F1-score was illusive. This shows that memorization is occurring even when the model is not overfitted. The individualized MI AUC, NER training loss and individual F1-score with standard deviation bands of the different entities in Figure~\ref{fig:same_phrases_main} are shown in Figure~\ref{fig:full_i2b2_secret_loss} in Appendix~\ref{app:full_docredact_loss}.    


\subsection{Password Redaction: Unknown Surrounding Sentences}
\label{sub:unknown-surround}
Until now we have assumed that the attacker knows the surrounding sentences of the target secrets. We consider the setting where the attacker does not know the exact surrounding sentences used in the private dataset. Note that this is no longer membership inference, but rather an attribute inference attack. 
We achieve this by repeating the experiment from Section~\ref{subsubsec:i2b2_secret_redaction}, but instead of a single phrase preceding the secret, we craft four distinct phrases for passwords in the member set: (``My secret is ****'', ``The password to my account is ****'', ``Here is the password: ****'', ``Log in using the password: ****''). Each password in the training dataset is randomly assigned to one of the four phrases. On the other hand, the non-member passwords have a different phrase ``The secret phrase is ****''. We also assume that the attacker uses this phrase as the surrounding sentence.

\begin{figure*}[t]
\centering
\begin{subfigure}{0.49\linewidth}
\centering
\includegraphics[width=0.99\linewidth]{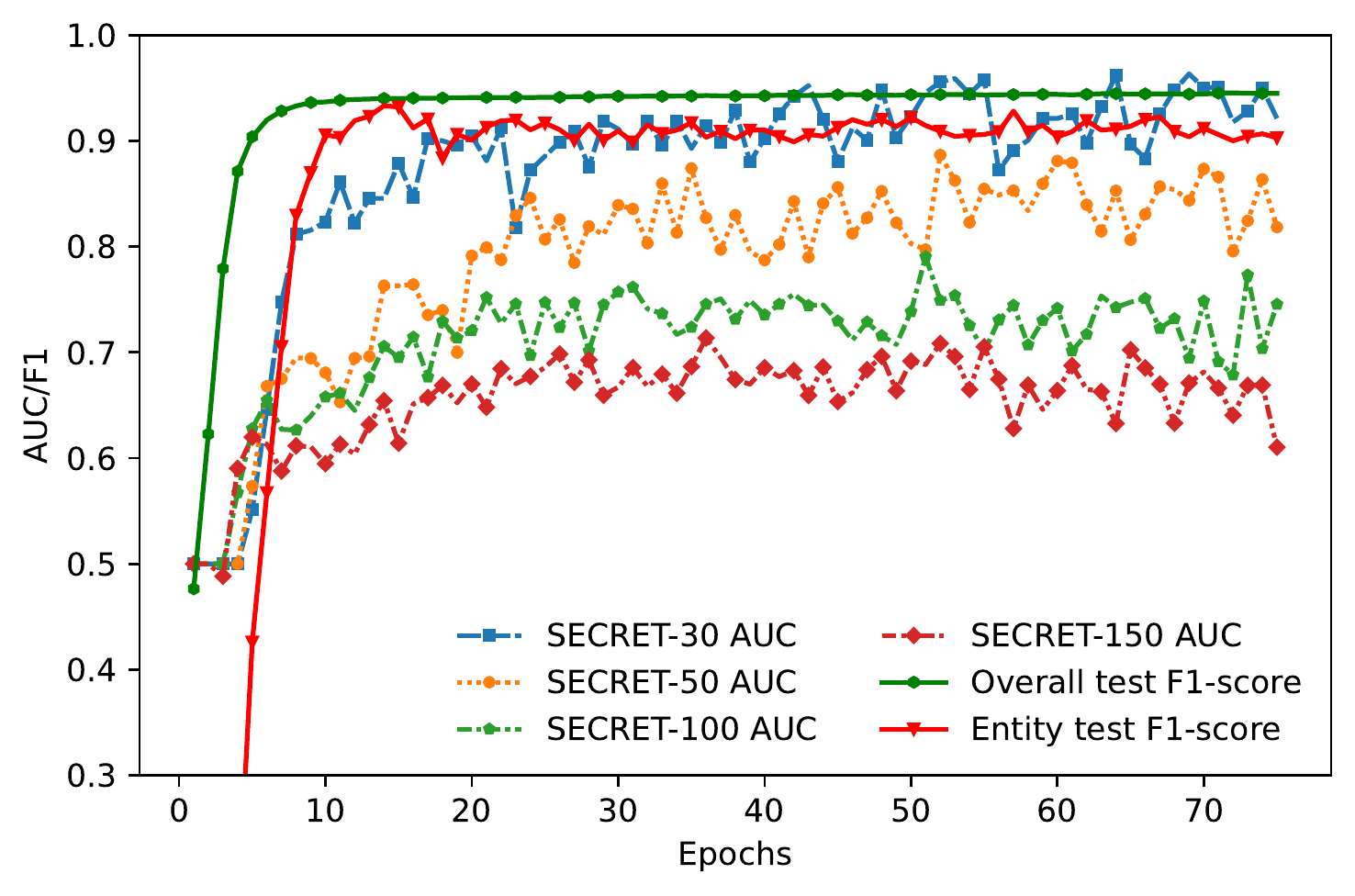}
\caption{{Same surrounding sentence as training.}}
\label{fig:same_phrases_main}
\end{subfigure}
\begin{subfigure}{0.49\linewidth}
\centering
\includegraphics[width=0.99\linewidth]{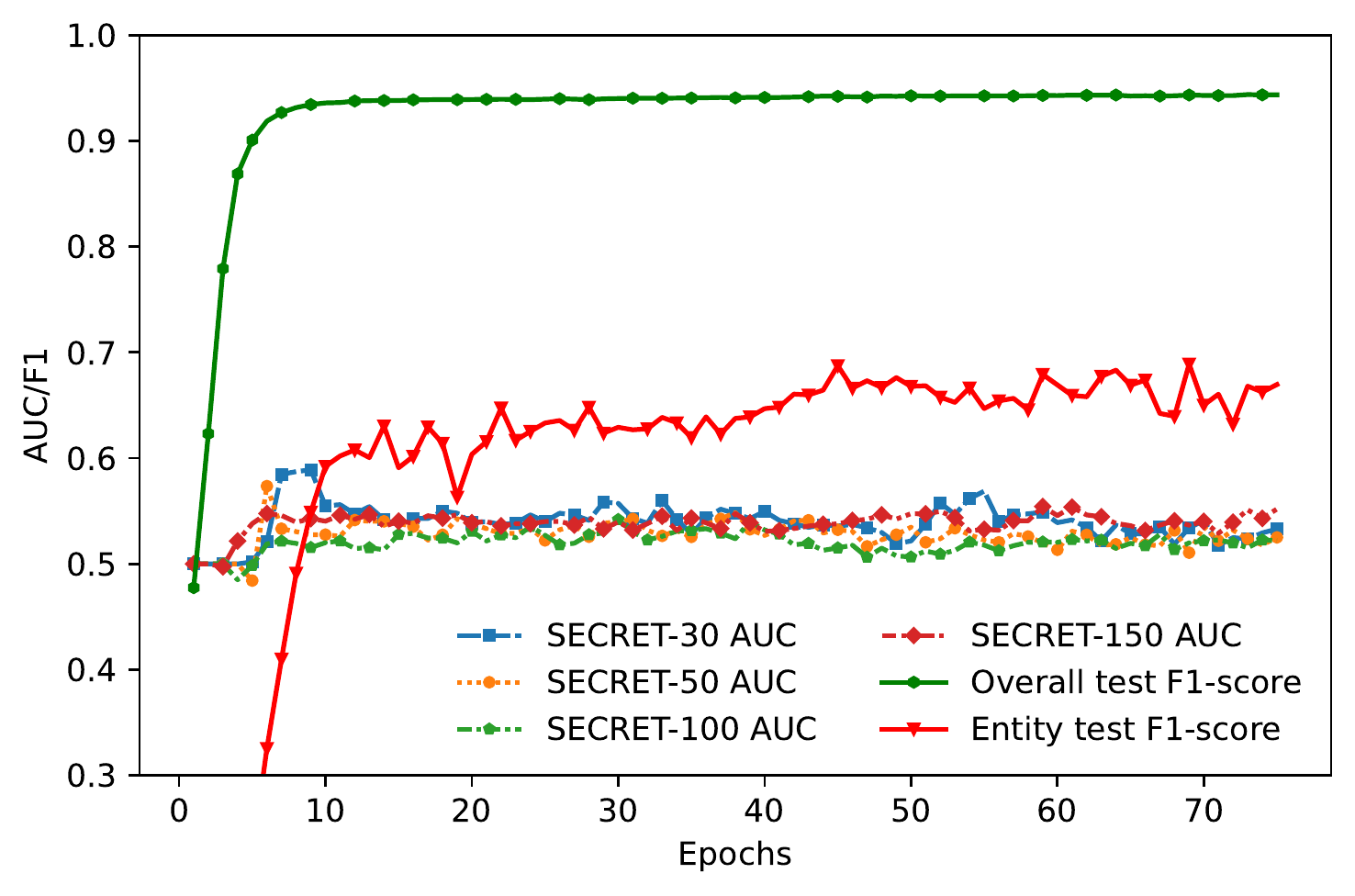}
\caption{
{Different surrounding sentences than training.}}
\label{fig:diff_phrases_main}
\end{subfigure}
\caption{MI AUC for each training epoch with differing numbers of training secrets inserted into the i2b2 dataset. Potential secrets are queried with surrounding sentences. The overall F1 score was computed on testing data. }
\end{figure*}


\descr{Results.} From Figure~\ref{fig:diff_phrases_main}, we observe that the performance of MI on unseen surrounding sentences is close to a random guess (0.5 AUC). This observation is not significantly influenced by the number of samples present in training the secret class in contrast to prior results. However, we note that the F1 score of redacting secrets has degraded to 0.65, with the use of different surrounding sentences in the test set to those in the training dataset (c.f. Figure~\ref{fig:same_phrases_main}), irrespective of which secrets were used.

The performance of the NER model is decreased if it has not previously seen a surrounding sentence during training. Thus, the training process should contain sufficiently many representations of expected surrounding sentences for better accuracy. On the contrary, the attack is more successful if the attacker can craft surrounding sentences that are similar to the ones used in the private dataset. For instance, an attacker may employ a strategy to recover a sufficiently similar surrounding sentence (by checking if the redaction accuracy of the updated NER model increases on a generic secret), prior to performing the membership inference attack on the target secret. For instance, for the case of phone numbers, the attacker could craft a phone number distributed similar to the original data (e.g., a US mobile phone number). The attacker can then query multiple manually crafted sentences with this phone number, e.g., ``Please call me on ..., My phone number is ..., ... is where you can reach me.'' With each subsequent query, the surrounding sentence with the highest probability can be further perturbed by replacing words that increase the confidence of the whole sentence. The idea being that the manually crafted surrounding sentence closest to the actual surrounding sentence would give the highest confidence by virtue of being in the training dataset. Figure~\ref{fig:same_phrases_main} shows the plausibility of this approach. 
However, we defer detailed investigation of the impact of similar surrounding sentences as future work.

\section{Membership Inference via Timing Attack}\label{sec:MIA_timing}
In this section, we propose a timing side-channel attack on NER models from observations that models process traning dataset words faster. The attack in prior sections was specific to memorization in the NN model, this attack targets the tokenizer which internally accesses a vocabulary maintained by the NER pipeline.
Like that of Section~\ref{threat_model}, the updated NER model is available to the attacker via black-box access, with an attacker objective to infer membership of sensitive training data used to update the model. An important distinction of this MI attack, is that surrounding sentences are inessential, and thus knowledge of the surrounding sentence is not assumed. 

\subsection{Timing attack methodology}

\subsubsection{Vocabulary}
A pretrained NER pipeline contains many components (We specifically evaluate spaCy's pre-trained \newline ``en\_core\_web\_lg'' and  ``en\_core\_web\_trf'' pipeline~\cite{spacy}, the key difference being the tokenizer, of which ``tok2vec'' and a ``ROBERTA based transformer'' is used respectively), each component can be treated as a separate model to be updated with user data. 
In these pipelines, there exists a vocabulary containing word associated strings (prefix, suffix, shape and norm) of the pre-training dataset. We call this the \textit{original vocabulary}. When one updates the pipeline with their private dataset, the vocabulary of the pipeline is updated with strings (prefix, suffix, shape and norm/lowercase for tok2vec, and byte-level byte pair encodings for the ROBERTA-base transformer) from the updating dataset (The \textit{updated vocabulary}), becoming a superset of the original vocabulary. Words within a vocabulary are in-vocab words, and those not are out-vocab words.

\subsubsection{Intuition Behind the Attack} 
The exploit occurs when the model takes additional time processing out-vocab words. Specifically, when a word passes through the pipeline, the tokenizer first extracts the tokens of the word which are checked in the vocabulary. If present, nothing changes, if not present, they are added to the vocabulary. Thus, vocabulary updates on new out-vocab words are likely to take longer. 
%
%
Member words from the training/update set are already present in the vocabulary, in contrast to unseen non-member words. This, presents an opportunity for an attacker to infer membership based on the inference time. 

For this attack, position ourselves as a black box attacker, leveraging only the execution time from the time of query to the time the output is returned. The assigned label and confidence of the word is ignored (Though there remains a potential to fuse these features). We demonstrate this attack with a password use-case, with an updated NER model is specifically trained to label passwords in documents. Again, the adversary's goal is to infer membership of passwords used in the updating (private) dataset.

\subsubsection{Attack on the Original NER Model}
\label{subsec:attack_orig_ner}
We first evaluate the timing side-channel attack on two public spaCy NER models. 
By accessing the pretrained vocabulary strings, we can establish the ground truth for members, or \textit{in-vocab set}. We remark that in practice, a black box attacker would not have simple access to the model vocabulary. Next, we generate a set of out-vocab words such that at minimum their `suffix' and `shape' are not in the original vocabulary's string table (the same set is used to evaluate the transformer pipeline). We note that we exclude the prefix feature when generating out-vocab words due to unavoidable overlaps with in-vocab words (by default, the first character of the word). 
Specifically, we generate a list of 5000 random passwords where each password has minimum length of 6 characters including at least 1 of each: uppercase, lowercase character, digit and special symbol. We refer to these generated passwords as the \textit{out-vocab set}.

\descr{Experimental Setup.} We randomly sample 2000 words from the \textit{in-vocab set} and 2000 password from the \textit{out-vocab set}. Then, we sequentially query the original NER model interleaving between in-vocab and out-vocab words. We alternate to ensure consistent delay across both sets as a result of other spurious processes on the system. Each query is simply a single word with no surrounding text, with a measurement of the time taken to process the query. 
From the execution time, the attacker can choose a threshold to classify the queried word as a member or non-member. For our evaluation we present the ROC curve and the AUC as a general measure of classification performance.

\subsubsection{Attack on an Updated NER Model}
\label{timing_attack_updated_ner}
We also evaluate an updated version of the NER model. Starting with the clean pretrained NER models, we update them with a subset of our generated passwords. 
To update the NER model, we generate 5000 passwords like Section~\ref{subsec:attack_orig_ner}, reserving the first 2000 passwords (\textit{updating dataset $D_u$}) to update the NER model. 
The NER model update includes teaching the NER to identify sensitive words with the new SECRET label. Specifically, training examples are in the form $s[r] = \text{``Alice's secret is $r$.''}$, where $r$ is a password from $D_u$, annotated as:
\begin{quote}
   ``Alice's secret is $r$.'',
\{`entities': [(0, 5, `PERSON'), (18, 18 + length($r$), `SECRET')]\}
\end{quote}

\descr{Experimental Setup.} We sample 2000 passwords from the remaining 3000 generated passwords. Then, we query the updated NER model with 2000 $r\in D_u$ (in-vocab) and 2000 $r\notin D_u$ (out-vocab), in an alternating manner. The execution time to output return for each query is measured. 
Again, the attacker can choose a threshold to classify the queried word as a member or non-member, though we present the ROC curve and the AUC as a general measure of performance.

\subsection{Timing Attack Results}

\descr{Timing Attack.} 
Figure~\ref{fig:timing_sid_channel} shows the receiver operating characteristic (ROC) curve and the AUC values of our attack on the original NER and the updated NER models.
We observe significant ease in inferring membership of words in both the original the updated NER pipelines. The lowest AUC is 0.9923, and is consistent across both the ``tok2vec'' and the ROBERTA based transformer pipelines. In both pipelines, the updated NER is marginally more susceptible to MI. It is clear that timing attacks against these pipelines are effective

We provide an expanded view of the in-vocab and out-vocab times in Figure~\ref{fig:absolute_time_ner} of Appendix~\ref{sec:app:timing}. From those figures, it can be observed in every setting, the NER model takes more time to process words that are outside the pre-trained vocabulary (in the original NER model) and outside of the updating dataset (in case of the updated NER model).

\begin{figure}[t]
\centering
\includegraphics[width=0.95\linewidth]{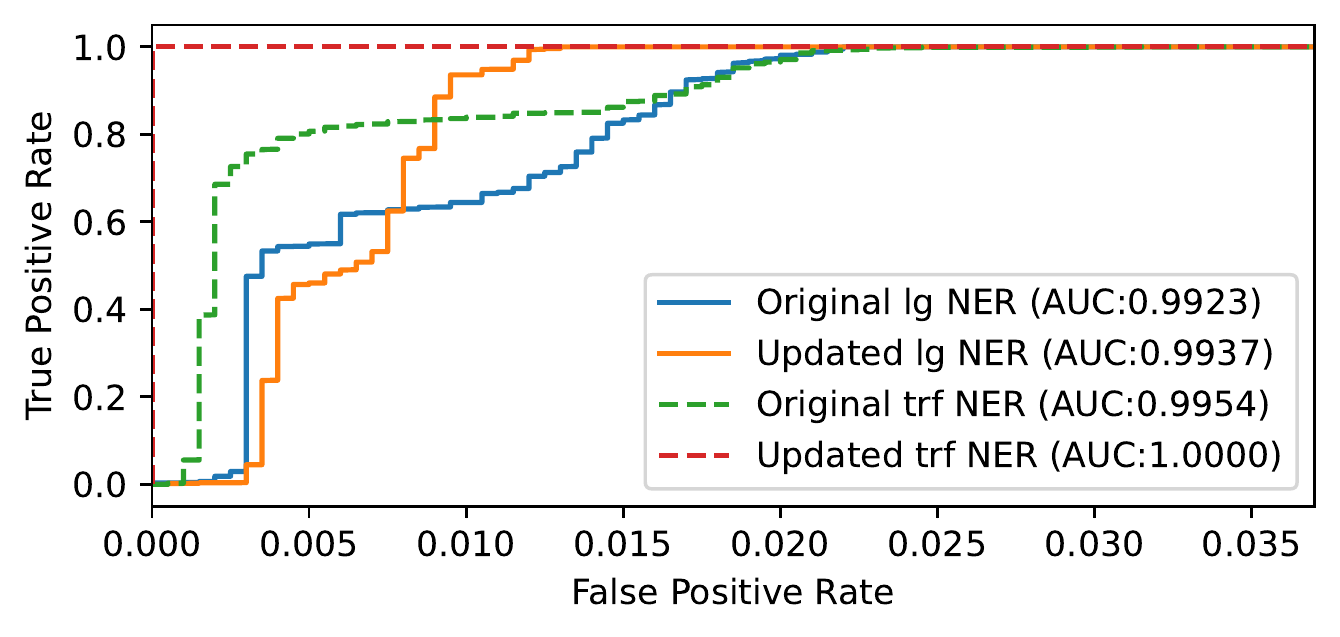}
\caption{Timing side-channel attack ROC curve to classify membership of word inputs for the original pre-trained and updated NER model, on both tok2vec and transformer pipelines. {Note that the x-axis does not span to 1.}
}
\label{fig:timing_sid_channel}
\end{figure}


\descr{Querying Non-Members Multiple Times.}  
Following the attack we observe that the features of even non-member words will be added to the vocabulary of the model. Noting the fact that spaCy's NER pipeline will add features of any new word into the vocabulary, we continue our attack on both versions of the NER model where an adversary queries the model with the same word (in-vocab and out-vocab) two more times. Surprisingly, we still obtain AUCs of greater than 0.8195 for both models and pipelines with the execution time of the model to process words on its \textit{second} or \textit{third} query (In full, 0.820, 0.994, 0.994, 1.00, for 2nd run, and 0.995, 0.995, 0.993, 0.999 on the 3rd run for Original LG, Updated LG, Original TRF, and Updated TRF respectively). 
{The persistence of timing vulnerability despite the vocabulary update means that there are additional components within the processing pipeline that are not updated, resulting in sufficient timing differences for successful membership inference. Regardless, we can conclude that NER models take more time to process words they are not trained on.} 



\section{Possible Defences}\label{sec:defences}

\descr{Neural Memorization.} 
Memorizing training data poses serious privacy risk in machine learning. It is crucial to measure how much memorization occurs and propose solutions to mitigate this issue or at least to train the model with less memorization. 
Several possible approaches to mitigate unintended memorization of NNs in text generation were discussed in~\cite{carlini2019secret}. Among them, using differential privacy is demonstrated to be effective as it could fully eliminate unintended memorization in text predictive models~\cite[Sec 9.3]{carlini2019secret}. 
An emergent relationship between the number of unique documents in an entity class and the success of the membership inference attack was observed in Section~\ref{subsec:MIA_medical_dataset}. From this relationship, a simple means to reducing (but not eliminating) the success of a membership inference attack is to either eliminate the entity, likely an infeasible option as the capability to redact these infrequent entities is destroyed. Or alternatively, to generate, or source additional similar entities to train the model, to populate and reduce the risk posed by the under-represented entity class.
An approach to mitigate the proposed attack exploiting model's memorization demonstrated in Section~\ref{subsec:MIA_medical_dataset} is to restrict the NER model to return only the label of the queried text instead of without the confidence scores. Though we note that Label-only MI attacks have since been developed~\cite{li2020label,choo2020label}, we leave the evaluation of these label-only attacks as future work.

\descr{Timing Attack.} 
A possible defence to this vulnerability is to disable the updating vocabulary step during model training (or updating) so that sensitive words included in the private dataset are not added to the vocabulary. It is also noted that adding every unseen word's features to the vocabulary might make the model hosted in a cloud to be vulnerable to attacks which query the model with drivel, i.e., inundating the vocabulary with bogus words which could lead to space issues.  
 
Masking timing characteristics has been proposed to prevent against timing attacks in cryptosystems~\cite{kocher2018timing}. It is possible to apply this idea to prevent timing side-channel in NER models. {For instance, one} approach to mitigate the timing side-channel attack is to delay returning the result until a pre-defined time has elapsed. This would remove any timing difference, but impacts overall system speed. {Another approach specific to our scenario is to add a constant delay to the time of the member words.} For instance, on the updated LG NER model (cf. Section~\ref{timing_attack_updated_ner}), the average execution time taken by the model to process the member words and non-member words are $t_{\text{m}} = 2.08 \text{ms}$ and $t_{\text{nm}} = 2.44 \text{ms}$, respectively. {Assuming both time distributions to be normally distributed with means $t_{\text{m}}$ and $t_{\text{nm}}$, and the same variance, adding a time of $t_\text{nm} - t_\text{m}$ to the time of member words will make the two distributions identical, and hence thwart the timing attack. In practice system delays, or contested resources may result in a few outliers, in which case a trimmed mean can be employed. We note that the success of this defense can only be empirically determined as it relies on the accuracy of estimated mean times of member and non-member words, how accurately the distribution of times can be approximated as normally distributed, and the threshold to determine outliers. 
} 

{
\paragraph{Experiments:} In order to test the constant delay defense, we added the constant time of $t_\text{nm} - t_\text{m}$ to member queries in the experiments of Section~\ref{timing_attack_updated_ner}. To remove outliers, the we calculated a 10\% trimmed mean. For instance, a $t_\text{nm} - t_\text{m}= 36 \text{ms}$ delay is applied to the updated LG NER model. From Figure~\ref{fig:timing_sid_channel_def}, it is immediately evident that the timing attack can be thwarted with the addition of constant time. In all experimental settings, the MI AUC is close to 0.5, or a random guess.  The slight deviation from 0.5 is due to the fact that the timing distribution is not exactly standard normal (see Figure~\ref{fig:absolute_time_ner} in Appendix~\ref{sec:app:timing}). Compare this to Figure~\ref{fig:timing_sid_channel}, which depicts the MI ROC with no defense, where AUCs are close to 1.0. 
}

\begin{figure}[t]
\centering
\includegraphics[width=0.95\linewidth]{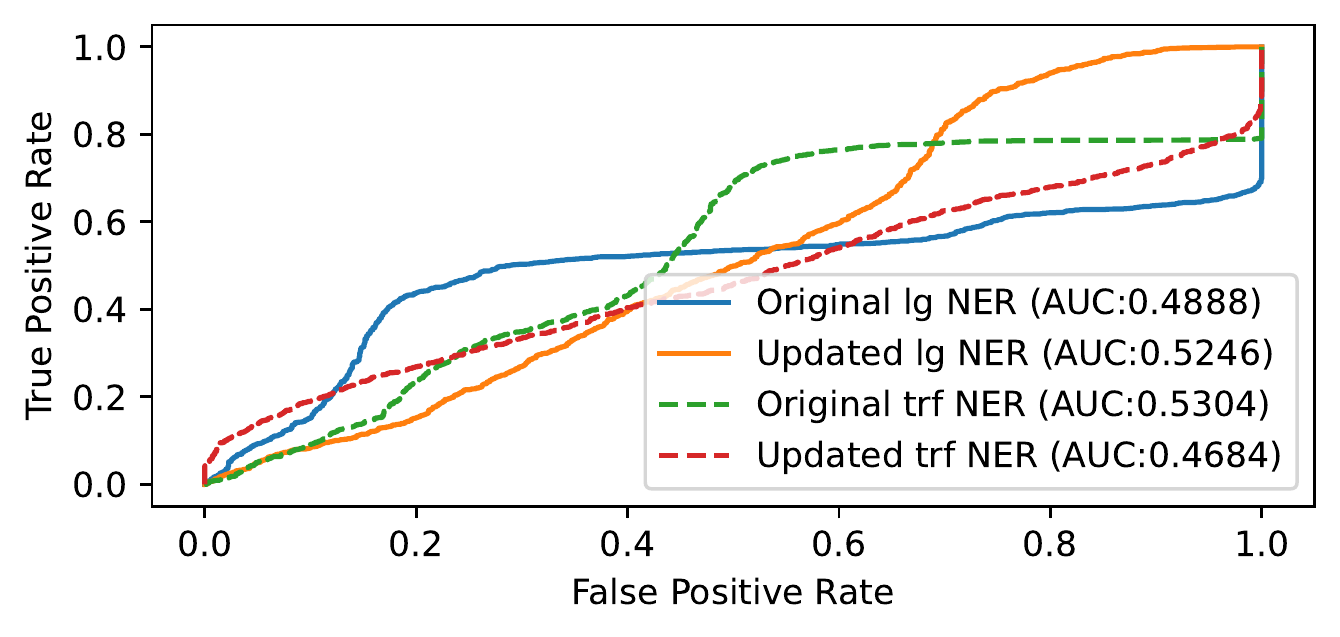}
\caption{{Constant time delay added to member words to defend against the timing side-channel attack. ROC curve classifies membership of word inputs.}
}
\label{fig:timing_sid_channel_def}
\end{figure}


\descr{Using Synthetic Data.} A potential defense to thwart MI attempts on either attack is replacing sensitive information in the private dataset with dummy information. However, this is not straightforward, as the owner will need to ensure that the dummy information at least preserves the semantics of the original information without degrading the accuracy of the NER model, e.g., length of ZIP codes, format of locations, and IP addresses, albeit in some cases, e.g., passwords, this may be possible. Moreover, this can be a tedious task for a large dataset, and some sensitive information may inadvertently be missed. Our attacks highlight the dangers of using sensitive datasets for training NER models without necessary protections. 
{
\paragraph{Experiments:} To test this mitigation measure we perform additional experiments with the i2b2 dataset. The i2b2 dataset is prepared by creating sets of members and non-members as in Section~\ref{subsubsec:i2b2_secret_redaction}. However, as an additional step, we replace instances of the sensitive label with dummy information from all training members. In particular, we choose the sensitive labels of Location and Contact, and replace all addresses and phone numbers with publicly available randomly generated synthetic ones.\footnote{\url{www.randomlists.com}} The model is then trained with this dummy training set. Note that the surrounding sentences remain the same. To test MIA accuracy we use the non-member set records from the same distribution as the original i2b2 dataset. We perform this experiment for two of the potentially vulnerable entities, i.e., Location and Contact. 

The results illustrated in Figure~\ref{fig:dummy_location} and \ref{fig:dummy_contact} show that the performance of the MI attack has now dropped to an AUC comparable to a random guess, demonstrating that a model trained with dummy data is less susceptible to membership inference attacks. This however, does result in a significant degradation of the model's ability to identify the correct entities due to the difference in the distribution of the dummy and original dataset. To resolve this trade-off, the task is to generate synthetic data such that the dummy data matches the original dataset's distribution, without compromising the privacy of the sensitive field. This means carefully crafting synthetic examples that capture sub-categories of the entities, e.g., the LOCATION category may contain different granularity of address information: from city, to zip code, to street address. This cannot be captured by simple data generation techniques such as the one used by us. We however leave this consideration for future work, given the setting and domain specificity of each sensitive entity. Figure~\ref{fig:defend_i2b2} in Appendix~\ref{app:full_docredact_loss}, shows the overall F1-score for all entities together with the standard deviation bands for these two experiments.}

\begin{figure}[t]
\centering
\begin{subfigure}{\linewidth}
\centering
\includegraphics[width=0.99\linewidth]{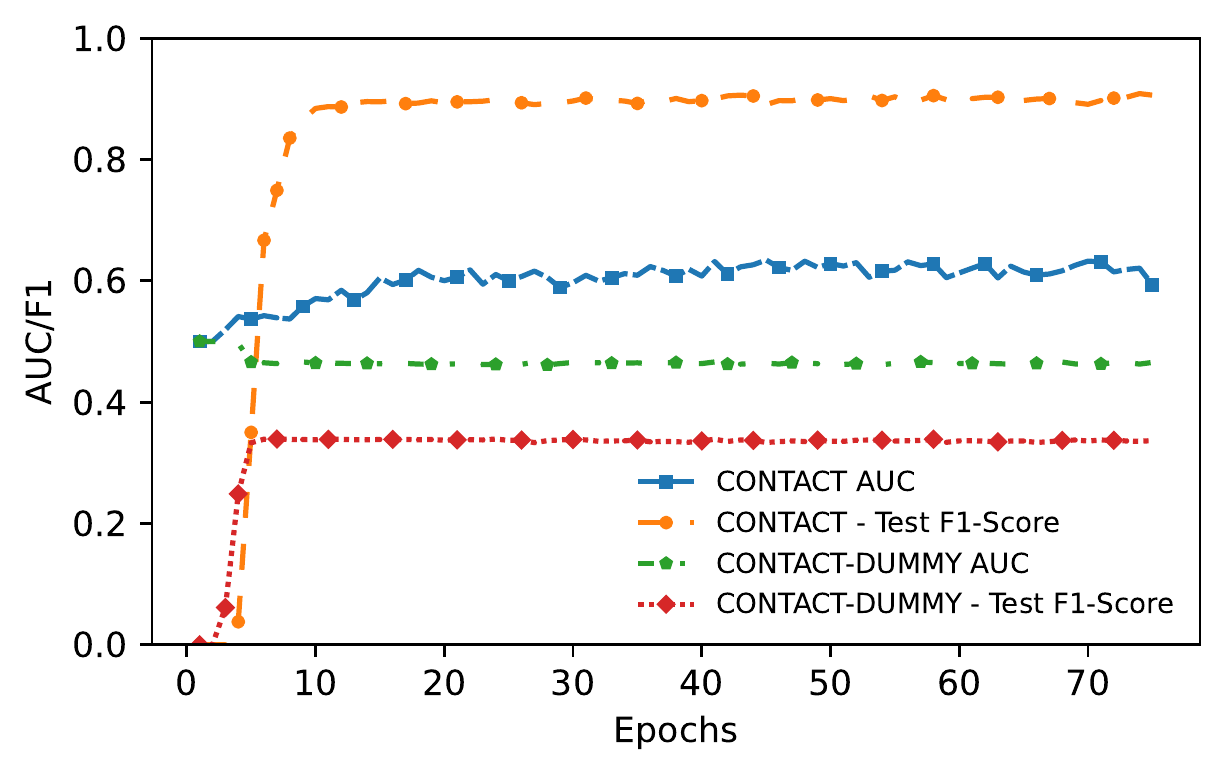}
\caption{
{Original vs DUMMY LOCATION MI AUC}
}
\label{fig:dummy_location}
\end{subfigure}
\begin{subfigure}{\linewidth}
\centering
\includegraphics[width=0.99\linewidth]{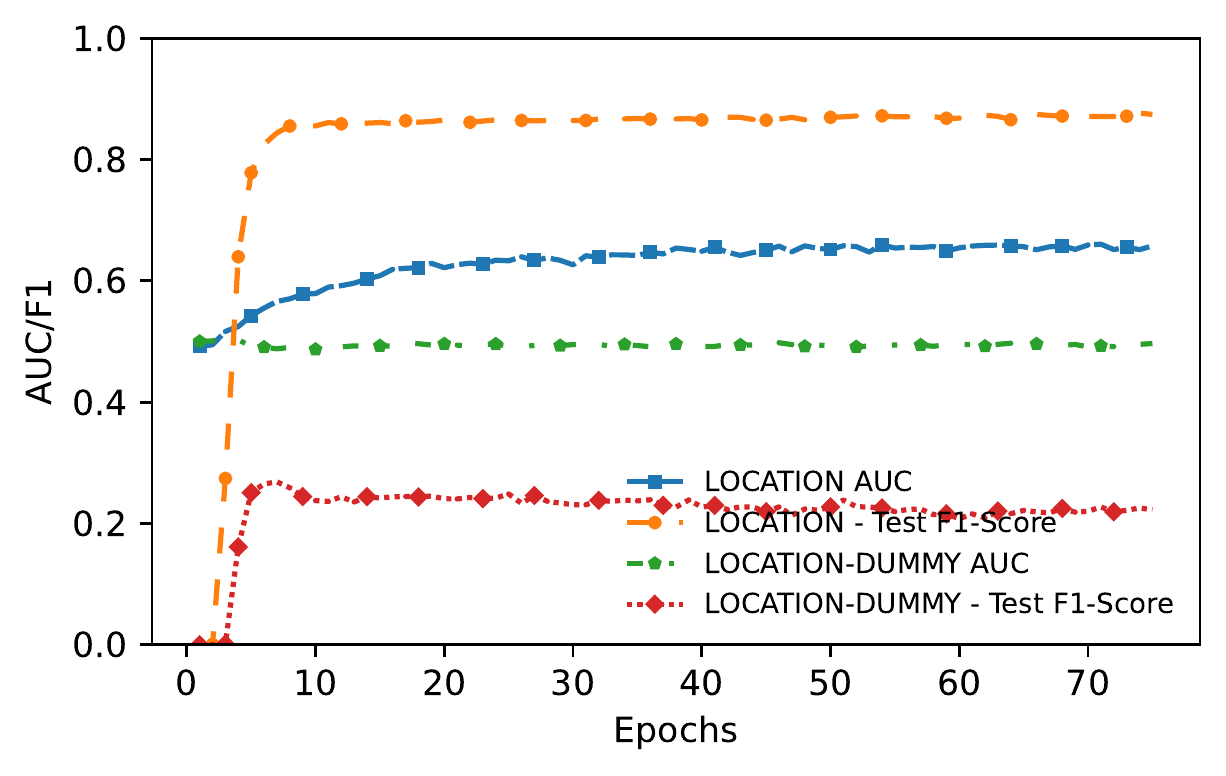}
\caption{
{Original vs DUMMY CONTACT MI AUC}
}
\label{fig:dummy_contact}
\end{subfigure}
\caption{{MI AUC comparison between Original dataset entity from Figure \ref{fig:i2b2_mia} vs DUMMY data for CONTACT and LOCATION.}}
\end{figure}
\section{Related Work}\label{sec:related_work}



\textit{Attacks against Text Classification Models.} Recent works have demonstrated a vulnerability of text classification models to adversarial sample attacks, reducing model accuracy or forcing the output of a target label~\cite{liang2017deep, jin2020bert, garg-ramakrishnan-2020-bae}. Garg et al.~\cite{garg-ramakrishnan-2020-bae} propose BERT-based adversarial examples with contextual perturbations from the BERT masked language model, to reduce classifier accuracy by more than 80\%. In contrast, we focus on analyzing the vulnerability of text classification models to membership inference and timing side-channel attacks.

\textit{Membership Inference (MI) Attacks.} With access to a ML model an attacker can determine if a data sample was in the training dataset, or infer its membership, Shokri et al.~\cite{shokri2017membership} first achieved this by training shadow models to mimic the behavior of the target model. Salem et al.~\cite{salem2019ml} proposed a low cost attack on ML models without the need of shadow models, in addition to a relaxed attack that sets a threshold for prediction confidence to identify membership. 
On text generation models, Song et al.~\cite{song2019auditing} present a user-level MI attack by training an auditing model that checks the membership of any user sample even if not explicitly used in training the model.
Salem et al.~\cite{salem2020updates} infers labels of updating data samples and reconstructs the updating data based on output differences between two versions of models. Moreover, this work targets image classification where as our focus is text classification.

\textit{MI Attacks against NER.}
To the best of our knowledge the only work analyzing MI attacks on Named Entity Recognition is~\cite{seyedi2021analysis}. Specifically, Seyedi et al. perform \cite{shokri2017membership}'s Shadow MI attack on a NER model (NeuroNER~\cite{NeuroNER}) used to de-identify medical records. However, Seyedi et al. concludes an inability to distinguish between members and non-members of the medical training data.  
However, as previously seen in Section~\ref{subsec:MIA_medical_dataset}, contrary to Seyedi et al.'s conclusion, we find there exists a realistic MI threat against the i2b2 medical dataset~\cite{stubbs2015annotating} and when applying MI on sensitive information like passwords.
Furthermore, the NeuroNER model contains a tokenizer, opening potential vulnerabilities to our timing side-channel attack. 

\textit{Memorization in Language Models.} Carlini et al.~\cite{carlini2019secret} demonstrate generative text models memorize out-of-distribution secret phrases in the training data and present methods to extract these secrets.
\cite{carlini2020extracting} has recently shown that large language models like GPT-2, trained on text from the internet including sensitive information from billions of users can be attacked to extract personally identifiable information (PII), in extremely large parameter models. Despite these large models demonstrating no memorization through generality tests, they still are capable of leaking information.  
Helali et al.~\cite{helali2020assessing} demonstrates the unintended memorization of NER models specific to labels that share linguistic materials with at least one other label. In this work, we demonstrate that text classification models, specifically Named Entity Recognition models also unintentionally memorize their training data with proposed attacks that exploit said memorization to infer the membership of sensitive information.
Zanella et al.~\cite{zanella2020analyzing} uses differential scores and ranks from two model versions to analyze information leakage of an update set for text prediction tasks.

\textit{Timing side-channel Attacks.}
A timing side-channel is defined as a side-channel that leaks information due to faulty or vulnerable operations of a system~\cite{biswas2017survey}. In literature, numerous timing side-channel attacks have been proposed to attack cryptographic systems. 
For instance, attacks may exploit delays in memory access of a program due to other programs accessing the same shared memory. Wang et al.~\cite{wang2014timing} proposed an attack to compromise the private key of RSA decryption by monitoring memory accesses. Wang et al.'s attacker continuously issues memory requests to memory shared by the RSA decryption program, measuring the end-to-end delay for request completion. With the delay, the attacker can estimate the number of `1' bits in the RSA's private key. Timing side-channel may also be based on execution time. Chen et al.~\cite{chen2013improving} proposed a timing side-channel attack against the RSA-CRT algorithm that executes a chosen ciphertext attack to collect time measurements on the overall decryption process.  
In neural networks, few works consider the timing side-channel. For instance, Duddu et al.\cite{duddu2018stealing} proposed a timing side-channel attack to reverse the neural network architecture. Specifically, the attacker queries the target model and measures the execution time which is then used to infer the model's depth. 
To the best of our knowledge, our work is the first that exploits the vulnerability of NER models to a timing side-channel for MI attacks.

{
\textit{Contextualizing Differences from Previous Work.} In this work we have shown that NER models are susceptible to membership and attribute inference attacks by exploiting neural memorization and timing side channels. As indicated above, neural memorization has already been demonstrated in the literature~\cite{carlini2019secret, carlini2020extracting}, however their setting is text generation instead of text classification. In the text generation setting, the model returns a ranking of possible word tokens given a sequence of words, which it has learned through its training. Thus, given a sequence such as ``The secret phrase is ...,'' the model returns possible tokens that would complete the sentence. If the model is susceptible to neural memorization, the sensitive word (password in this case) would be higher ranked in the output. From this, a membership inference attack can be launched in a straightforward way. Our text classification setting is different, as the model never suggests tokens, and we need to modify the attack to try different possibilities (e.g., password dictionaries) to guess the missing word (e.g., password).  

We also remark that membership and attribute inference attacks in general need not be related to neural memorization. For instance, membership inference attacks have been demonstrated on machine learning models other than neural networks~\cite{shokri2017membership}. Furthermore, membership inference attacks may use a different exploit other than neural memorization. One such example is overfitting due to over training which is different from neural memorization~\cite{carlini2019secret}. In this paper we have also shown how membership inference attacks can be launched via exploiting side channels, specifically, timing side channels. Note that timing side channels have previously only been demonstrated on stealing machine learning models (model parameters), and not on membership inference on natural language processing models~\cite{duddu2018stealing}. We have shown how this can be done by exploiting the built-in dictionary of member words in NER systems.
}

\section{Conclusion}\label{sec:conclusion}
Many emerging applications of Named Entity Recognition (NER) models show their potential to be monetized as a service, including identification of medical and other personal information in documents to minimise impacts of data breaches, and redaction of sensitive information for document sharing. With the spaCy NER model as a use case, we have shown that the current crop of neural network based NER models are vulnerable to privacy attacks on their underlying training data, and thus in their current form they are not suitable to be hosted as an online service for labeling sensitive information. We have demonstrated two main attacks on NER models, one based on neural network memorisation, and the other based on timing differences when queried with text appearing in the training dataset versus text outside the training dataset. Both attacks can be somewhat mitigated by using techniques such as differential privacy and timing delays, with obvious accuracy and efficiency tradeoffs. We hope our work motivates the pursuit for more privacy-preserving NER models.

\bibliographystyle{plain}
\bibliography{references}

\begin{thebibliography}{10}

\bibitem{AHIMA_med_retention}
AHIMA.
\newblock Ahima - retention and destruction of health information.
\newblock \url{https://library.ahima.org/PB/RetentionDestruction}, 2013.

\bibitem{bengio2003neural}
Yoshua Bengio, R{\'e}jean Ducharme, Pascal Vincent, and Christian Janvin.
\newblock A neural probabilistic language model.
\newblock {\em The journal of machine learning research}, 3:1137--1155, 2003.

\bibitem{bengio2017deep}
Yoshua Bengio, Ian Goodfellow, and Aaron Courville.
\newblock {\em Deep learning}, volume~1.
\newblock MIT press Massachusetts, USA:, 2017.

\bibitem{biswas2017survey}
Arnab~Kumar Biswas, Dipak Ghosal, and Shishir Nagaraja.
\newblock A survey of timing channels and countermeasures.
\newblock {\em ACM Computing Surveys (CSUR)}, 50(1):1--39, 2017.

\bibitem{bojanowski2017enriching}
Piotr Bojanowski, Edouard Grave, Armand Joulin, and Tomas Mikolov.
\newblock Enriching word vectors with subword information.
\newblock {\em Transactions of the Association for Computational Linguistics},
  5:135--146, 2017.

\bibitem{brown2020language}
Tom Brown, Benjamin Mann, Nick Ryder, Melanie Subbiah, Jared~D Kaplan, Prafulla
  Dhariwal, Arvind Neelakantan, Pranav Shyam, Girish Sastry, Amanda Askell,
  et~al.
\newblock Language models are few-shot learners.
\newblock {\em Advances in neural information processing systems},
  33:1877--1901, 2020.

\bibitem{carlini2019secret}
Nicholas Carlini, Chang Liu, {\'U}lfar Erlingsson, Jernej Kos, and Dawn Song.
\newblock The secret sharer: Evaluating and testing unintended memorization in
  neural networks.
\newblock In {\em 28th $\{$USENIX$\}$ Security Symposium ($\{$USENIX$\}$
  Security 19)}, pages 267--284, 2019.

\bibitem{carlini2020extracting}
Nicholas Carlini, Florian Tramer, Eric Wallace, Matthew Jagielski, Ariel
  Herbert-Voss, Katherine Lee, Adam Roberts, Tom Brown, Dawn Song, Ulfar
  Erlingsson, Alina Oprea, and Colin Raffel.
\newblock Extracting training data from large language models, 2020.

\bibitem{ner-covid}
Rosario Catelli, Francesco Gargiulo, Valentina Casola, Giuseppe De~Pietro,
  Hamido Fujita, and Massimo Esposito.
\newblock Crosslingual named entity recognition for clinical de-identification
  applied to a covid-19 italian data set.
\newblock {\em Applied Soft Computing}, 97:106779, 2020.

\bibitem{chen2013improving}
CaiSen Chen, Tao Wang, and Junjian Tian.
\newblock Improving timing attack on rsa-crt via error detection and correction
  strategy.
\newblock {\em Information Sciences}, 232:464--474, 2013.

\bibitem{Chiu2016NamedER}
Jason P.~C. Chiu and Eric Nichols.
\newblock Named entity recognition with bidirectional lstm-cnns.
\newblock {\em Transactions of the Association for Computational Linguistics},
  4:357--370, 2016.

\bibitem{choo2020label}
Christopher~A Choquette-Choo, Florian Tramer, Nicholas Carlini, and Nicolas
  Papernot.
\newblock Label-only membership inference attacks.
\newblock In {\em International conference on machine learning}, pages
  1964--1974. PMLR, 2021.

\bibitem{csi_intellidact}
Computing System~Innovations (CSI).
\newblock Csi - intellidact ai\texttrademark data redaction.
\newblock \url{http://csisoft.com/redaction/}, 2019.
\newblock Accessed: 2021-05-04.

\bibitem{github_passwordlist}
danielmiessler.
\newblock Github - danielmiessler - seclists - 1 million password list - top
  1000000.
\newblock
  \url{https://github.com/danielmiessler/SecLists/blob/master/Passwords/Common-Credentials/10-million-password-list-top-1000000.txt},
  2020.

\bibitem{NeuroNER}
Franck Dernoncourt.
\newblock Github - neuroner.
\newblock \url{https://github.com/Franck-Dernoncourt/NeuroNER}, 2019.
\newblock Accessed: 2021-04-23.

\bibitem{dernoncourt2017identification}
Franck Dernoncourt, Ji~Young Lee, Ozlem Uzuner, and Peter Szolovits.
\newblock De-identification of patient notes with recurrent neural networks.
\newblock {\em Journal of the American Medical Informatics Association},
  24(3):596--606, 2017.

\bibitem{duddu2018stealing}
Vasisht Duddu, Debasis Samanta, D~Vijay Rao, and Valentina~E Balas.
\newblock Stealing neural networks via timing side channels.
\newblock {\em arXiv preprint arXiv:1812.11720}, 2018.

\bibitem{feldman2020does}
Vitaly Feldman.
\newblock Does learning require memorization? a short tale about a long tail.
\newblock In {\em Proceedings of the 52nd Annual ACM SIGACT Symposium on Theory
  of Computing}, pages 954--959, 2020.

\bibitem{garg-ramakrishnan-2020-bae}
Siddhant Garg and Goutham Ramakrishnan.
\newblock {BAE}: {BERT}-based adversarial examples for text classification.
\newblock In {\em Proceedings of the 2020 Conference on Empirical Methods in
  Natural Language Processing (EMNLP)}, pages 6174--6181, Online, November
  2020. Association for Computational Linguistics.

\bibitem{ge2020fedner}
Suyu Ge, Fangzhao Wu, Chuhan Wu, Tao Qi, Yongfeng Huang, and Xing Xie.
\newblock Fedner: Privacy-preserving medical named entity recognition with
  federated learning.
\newblock {\em arXiv preprint arXiv:2003.09288}, 2020.

\bibitem{gessler2020amalgum}
Luke Gessler, Siyao Peng, Yang Liu, Yilun Zhu, Shabnam Behzad, and Amir Zeldes.
\newblock Amalgum--a free, balanced, multilayer english web corpus.
\newblock {\em arXiv preprint arXiv:2006.10677}, 2020.

\bibitem{gomez2010data}
Jos{\'e}~Mar{\'\i}a G{\'o}mez-Hidalgo, Jos{\'e}~Miguel Mart{\'\i}n-Abreu,
  Javier Nieves, Igor Santos, Felix Brezo, and Pablo~G Bringas.
\newblock Data leak prevention through named entity recognition.
\newblock In {\em 2010 IEEE Second International Conference on Social
  Computing}, pages 1129--1134. IEEE, 2010.

\bibitem{gordon2013mra}
K~Gordon.
\newblock Mra thought of the day--medical record redacting: A burdensome and
  problematic method for protecting patient privacy.
\newblock {\em MRA Health Information Services}, 2013.

\bibitem{habibi-med-ner}
Maryam Habibi, Leon Weber, Mariana Neves, David~Luis Wiegandt, and Ulf Leser.
\newblock Deep learning with word embeddings improves biomedical named entity
  recognition.
\newblock {\em Bioinformatics}, 33(14):i37--i48, 2017.

\bibitem{helali2020assessing}
Mossad Helali, Thomas Kleinbauer, and Dietrich Klakow.
\newblock Assessing unintended memorization in neural discriminative sequence
  models.
\newblock In {\em International Conference on Text, Speech, and Dialogue},
  pages 265--272. Springer, 2020.

\bibitem{hoang2019efficient}
Andrew Hoang, Antoine Bosselut, Asli Celikyilmaz, and Yejin Choi.
\newblock Efficient adaptation of pretrained transformers for abstractive
  summarization, 2019.

\bibitem{spacy}
Matthew Honnibal, Ines Montani, Sofie Van~Landeghem, and Adriane Boyd.
\newblock {spaCy: Industrial-strength Natural Language Processing in Python},
  2020.

\bibitem{hou2019few}
Yutai Hou, Zhihan Zhou, Yijia Liu, Ning Wang, Wanxiang Che, Han Liu, and Ting
  Liu.
\newblock Few-shot sequence labeling with label dependency transfer and
  pair-wise embedding.
\newblock {\em arXiv preprint arXiv:1906.08711}, 2019.

\bibitem{ibm_redact}
IBM.
\newblock Ibm - automated role-based redaction reduces risk and streamlines
  data security.
\newblock
  \url{https://www.ibm.com/blogs/cloud-computing/2016/11/28/automated-role-based-redaction-reduces-risk-streamlines-data-security/},
  2016.
\newblock Accessed: 2021-05-04.

\bibitem{docu_manage_FBI_2020}
Fortune~Business Insights.
\newblock Document management services market size, share \&; covid-19 impact
  analysis, by services (archiving and storage, imaging and scanning, printing,
  mailroom services and others), by end-user (medical, financial, government,
  audit \&; consulting, corporate, telecommunication, and insurance \&;
  re-insurance) and regional forecast, 2020-2027, Jun 2020.

\bibitem{jin2020bert}
Di~Jin, Zhijing Jin, Joey~Tianyi Zhou, and Peter Szolovits.
\newblock Is bert really robust? a strong baseline for natural language attack
  on text classification and entailment.
\newblock In {\em Proceedings of the AAAI conference on artificial
  intelligence}, volume~34, pages 8018--8025, 2020.

\bibitem{kanchana2020optical}
M~Kanchana, Muskan Sharma, and Hrithik Somani.
\newblock Optical character recognition for a redaction system using machine
  learning techniques.
\newblock Technical report, EasyChair, 2020.

\bibitem{devlin2018bert}
Jacob Devlin Ming-Wei~Chang Kenton and Lee~Kristina Toutanova.
\newblock Bert: Pre-training of deep bidirectional transformers for language
  understanding.
\newblock In {\em Proceedings of NAACL-HLT}, pages 4171--4186, 2019.

\bibitem{kocher2018timing}
P~Kocher.
\newblock Timing attackson implementations of diffie-hellman, rsa, dss and
  other systems.
\newblock In {\em Proc. Adv. Cryptol.-Crypto}, volume~96, 1996.

\bibitem{kuhlmann-etal-2011-dynamic}
Marco Kuhlmann, Carlos G{\'o}mez-Rodr{\'\i}guez, and Giorgio Satta.
\newblock Dynamic programming algorithms for transition-based dependency
  parsers.
\newblock In {\em Proceedings of the 49th Annual Meeting of the Association for
  Computational Linguistics: Human Language Technologies}, pages 673--682,
  Portland, Oregon, USA, June 2011. Association for Computational Linguistics.

\bibitem{lample2016neural}
Guillaume Lample, Miguel Ballesteros, Sandeep Subramanian, Kazuya Kawakami, and
  Chris Dyer.
\newblock Neural architectures for named entity recognition.
\newblock In {\em Proceedings of the 2016 Conference of the North American
  Chapter of the Association for Computational Linguistics: Human Language
  Technologies}, pages 260--270, 2016.

\bibitem{li2011validation}
Chiyuan Li and Zhiqiang Yao.
\newblock The validation of credit card number on wired and wireless internet.
\newblock {\em Journal of Networks}, 6(3):432, 2011.

\bibitem{li2020label}
Zheng Li and Yang Zhang.
\newblock Label-leaks: Membership inference attack with label.
\newblock {\em arXiv preprint arXiv:2007.15528}, 2020.

\bibitem{liang2017deep}
Bin Liang, Hongcheng Li, Miaoqiang Su, Pan Bian, Xirong Li, and Wenchang Shi.
\newblock Deep text classification can be fooled.
\newblock In {\em Proceedings of the 27th International Joint Conference on
  Artificial Intelligence}, pages 4208--4215, 2018.

\bibitem{ma2016end}
Xuezhe Ma and Eduard Hovy.
\newblock End-to-end sequence labeling via bi-directional lstm-cnns-crf.
\newblock In {\em Proceedings of the 54th Annual Meeting of the Association for
  Computational Linguistics (Volume 1: Long Papers)}, pages 1064--1074, 2016.

\bibitem{mikolov2013distributed}
Tomas Mikolov, Ilya Sutskever, Kai Chen, Greg~S Corrado, and Jeff Dean.
\newblock Distributed representations of words and phrases and their
  compositionality.
\newblock {\em Advances in neural information processing systems}, 26, 2013.

\bibitem{mphasis_airedact}
Mphasis.
\newblock Mphasis - increased speed and accuracy of personally identifiable
  redaction through ai-driven approach.
\newblock
  \url{https://www.mphasis.com/home/innovation/nextlabs/aws-partnership/ai-driven-pii-redaction-for-an-insurance-provider.html},
  2021.
\newblock Accessed: 2021-05-04.

\bibitem{nadeau2007survey}
David Nadeau and Satoshi Sekine.
\newblock A survey of named entity recognition and classification.
\newblock {\em Lingvisticae Investigationes}, 30(1):3--26, 2007.

\bibitem{nanp}
NANPA.
\newblock About the north american numbering plan.
\newblock \url{https://www.nationalnanpa.com/about_us/abt_nanp.html}, 2020.
\newblock Accessed: 2021-05-04.

\bibitem{nivre2004incrementality}
Joakim Nivre.
\newblock Incrementality in deterministic dependency parsing.
\newblock In {\em Proceedings of the workshop on incremental parsing: Bringing
  engineering and cognition together}, pages 50--57, 2004.

\bibitem{pearl2016data}
Annie Pearl, Andy Kiang, and Joel Bailon.
\newblock Data loss prevention (dlp) methods by a cloud service including third
  party integration architectures, October~18 2016.
\newblock US Patent 9,473,532.

\bibitem{pennington2014glove}
Jeffrey Pennington, Richard Socher, and Christopher~D Manning.
\newblock Glove: Global vectors for word representation.
\newblock In {\em Proceedings of the 2014 conference on empirical methods in
  natural language processing (EMNLP)}, pages 1532--1543, 2014.

\bibitem{radford2019language}
Alec Radford, Jeffrey Wu, Rewon Child, David Luan, Dario Amodei, and Ilya
  Sutskever.
\newblock Language models are unsupervised multitask learners.
\newblock {\em OpenAI blog}, 1(8):9, 2019.

\bibitem{salem2020updates}
Ahmed Salem, Apratim Bhattacharya, Michael Backes, Mario Fritz, and Yang Zhang.
\newblock Updates-leak: Data set inference and reconstruction attacks in online
  learning.
\newblock In {\em 29th $\{$USENIX$\}$ Security Symposium ($\{$USENIX$\}$
  Security 20)}, pages 1291--1308, 2020.

\bibitem{salem2019ml}
Ahmed Salem, Yang Zhang, Mathias Humbert, Pascal Berrang, Mario Fritz, and
  Michael Backes.
\newblock Ml-leaks: Model and data independent membership inference attacks and
  defenses on machine learning models.
\newblock In {\em 26th Annual Network and Distributed System Security
  Symposium, {NDSS} 2019, San Diego, California, USA, February 24-27, 2019},
  USA, 2019. The Internet Society.

\bibitem{SEC_finance_retention}
SEC.
\newblock Securities and exchange commission - final rule: Retention of records
  relevant to audits and reviews.
\newblock \url{https://www.sec.gov/rules/final/33-8180.htm}, 2003.

\bibitem{aws_airedact}
Amazon~Web Services.
\newblock Amazon aws transcribe - automatic content redaction.
\newblock
  \url{https://docs.aws.amazon.com/transcribe/latest/dg/content-redaction.html},
  2021.

\bibitem{seyedi2021analysis}
Salman Seyedi, Li~Xiong, Shamim Nemati, and Gari~D Clifford.
\newblock An analysis of protected health information leakage in deep-learning
  based de-identification algorithms.
\newblock {\em arXiv preprint arXiv:2101.12099}, 2021.

\bibitem{shokri2017membership}
Reza Shokri, Marco Stronati, Congzheng Song, and Vitaly Shmatikov.
\newblock Membership inference attacks against machine learning models.
\newblock In {\em 2017 IEEE Symposium on Security and Privacy (SP)}, pages
  3--18. IEEE, 2017.

\bibitem{song2020information}
Congzheng Song and Ananth Raghunathan.
\newblock Information leakage in embedding models.
\newblock In {\em Proceedings of the 2020 ACM SIGSAC Conference on Computer and
  Communications Security}, pages 377--390, 2020.

\bibitem{song2019auditing}
Congzheng Song and Vitaly Shmatikov.
\newblock Auditing data provenance in text-generation models.
\newblock In {\em Proceedings of the 25th ACM SIGKDD International Conference
  on Knowledge Discovery \& Data Mining}, pages 196--206, 2019.

\bibitem{entityrecognizer}
spaCy.
\newblock Entityrecognizer · spacy api documentation.

\bibitem{spacyparser}
spaCy GitHub.
\newblock spacy/parser.py at master · explosion/spacy.

\bibitem{stubbs2015annotating}
Amber Stubbs and {\"O}zlem Uzuner.
\newblock Annotating longitudinal clinical narratives for de-identification:
  The 2014 i2b2/uthealth corpus.
\newblock {\em Journal of biomedical informatics}, 58:S20--S29, 2015.

\bibitem{wang2014timing}
Yao Wang, Andrew Ferraiuolo, and G~Edward Suh.
\newblock Timing channel protection for a shared memory controller.
\newblock In {\em 2014 IEEE 20th International Symposium on High Performance
  Computer Architecture (HPCA)}, pages 225--236. IEEE, 2014.

\bibitem{wheeler2016zxcvbn}
Daniel~Lowe Wheeler.
\newblock zxcvbn: Low-budget password strength estimation.
\newblock In {\em 25th $\{$USENIX$\}$ Security Symposium ($\{$USENIX$\}$
  Security 16)}, pages 157--173, 2016.

\bibitem{yadav2019survey}
Vikas Yadav and Steven Bethard.
\newblock A survey on recent advances in named entity recognition from deep
  learning models, 2019.

\bibitem{yeom2018privacy}
Samuel Yeom, Irene Giacomelli, Matt Fredrikson, and Somesh Jha.
\newblock Privacy risk in machine learning: Analyzing the connection to
  overfitting.
\newblock In {\em 2018 IEEE 31st Computer Security Foundations Symposium
  (CSF)}, pages 268--282. IEEE, 2018.

\bibitem{zanella2020analyzing}
Santiago Zanella-B{\'e}guelin, Lukas Wutschitz, Shruti Tople, Victor R{\"u}hle,
  Andrew Paverd, Olga Ohrimenko, Boris K{\"o}pf, and Marc Brockschmidt.
\newblock Analyzing information leakage of updates to natural language models.
\newblock In {\em Proceedings of the 2020 ACM SIGSAC Conference on Computer and
  Communications Security}, pages 363--375, 2020.

\bibitem{zeldes2017gum}
Amir Zeldes.
\newblock The gum corpus: Creating multilayer resources in the classroom.
\newblock {\em Language Resources and Evaluation}, 51(3):581--612, 2017.

\bibitem{zhao2021feasibility}
Benjamin Zi~Hao Zhao, Aviral Agrawal, Catisha Coburn, Hassan~Jameel Asghar,
  Raghav Bhaskar, Mohamed~Ali Kaafar, Darren Webb, and Peter Dickinson.
\newblock On the (in) feasibility of attribute inference attacks on machine
  learning models.
\newblock In {\em 2021 IEEE 6th IEEE European Symposium on Security and Privacy
  (EuroS\&P)}. IEEE, 2021.

\end{thebibliography}

\appendix

\section{Equivalence of Words in spaCy}
\label{app:equiv-proof}
We define the Hamming difference between two words $w_1$ and $w_2$ of equal length as the number of positions where they differ. That is, all positions $j$ such that $w_1(j) \neq w_2(j)$. Note that we consider the lower and upper case characters as two different characters. Two words $w_1$ and $w_2$ are considered the same, denoted $w_1 = w_2$, if and only if they are of the same length and their Hamming difference is zero. Otherwise they are considered different. We assume the prefix to be the first character and the suffix to be the last 3 characters; the default setting in spaCy as mentioned in Section~\ref{ner_spacy}. We say two words are equal in spaCy if they have the same prefix, suffix, norm and shape, otherwise they are different in spaCy. Our first observation is that two different words may be equal in spaCy. This is easy to see with the words: ``ABCDEfghij'' and ``ABCDefghij''. Both words have the same prefix, suffix, norm and shape (i.e., XXXXxxxx, since same character sequences are truncated after a length of 4). Thus, this separates the two notions of equality of words. We are interested in knowing the conditions under which the two notions differ.

For any word $w$, we let $\text{shape}(w)$ denote its shape under spaCy (which is also a string). We define the middle of the word $w$ as the substring obtained after removing its prefix and suffix. We denote it by $\text{mid}(w)$. 

\begin{proposition}
\label{prop:length}
Let $w_1$ and $w_2$ be two words equal in spaCy. Then their prefixes, suffixes and length are the same.
\end{proposition}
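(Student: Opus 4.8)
The plan is to unpack the definition of spaCy-equality and extract the easy consequences first, then isolate the length claim as the part requiring an argument. By definition, $w_1$ and $w_2$ equal in spaCy means they share the same prefix, suffix, norm, and shape. The prefix is the first character and the suffix is the last three characters (the default settings from Section~\ref{ner_spacy}); these are case-sensitive substrings. So the equality of prefixes and the equality of suffixes are immediate from the definition itself: there is nothing to prove there beyond noting that these features being identical as strings means the corresponding positions of $w_1$ and $w_2$ agree. The real content of the proposition is the claim that the lengths must also coincide, since shape-equality alone does not obviously force equal length (indeed, the excerpt's own example ``Threats'' and ``Embed'' shows two words of different lengths with the same shape).

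First I would fix notation: write $\ell_1 = |w_1|$ and $\ell_2 = |w_2|$ and suppose for contradiction that $\ell_1 \neq \ell_2$, say $\ell_1 < \ell_2$. The strategy is to derive a contradiction from the equality of both the \emph{norm} and the \emph{shape}, leveraging that the norm contains (at least a subsequence of) the lowercase form of the whole word, while the shape is a length-truncated orthographic signature. The key observation is how shape truncation works: a maximal run of identical shape-characters (one of \texttt{x}, \texttt{X}, \texttt{d}) is collapsed to length at most $4$, but special symbols are preserved verbatim. So two words can have equal shape only if, after this run-length truncation, their orthographic patterns match exactly. The norm, by contrast, is essentially a normalized lowercase rendering and is \emph{not} run-length truncated, so it should track the actual length much more faithfully.

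The main obstacle I anticipate is pinning down precisely what ``norm'' guarantees, because the excerpt only says the norm ``contains a subsequence of the lower case form of the word'' — that hedged phrasing (a \emph{subsequence}, and only ``at the very least'') makes it delicate to conclude that norm-equality forces equal length. I would therefore argue it by combining norm and shape rather than relying on norm alone. The idea is that equal shape already pins down the multiset and arrangement of character \emph{types} up to truncation of long monotype runs, and equal norm pins down the actual lowercase characters; together with the fixed first character (prefix) and last three characters (suffix) being shared, the only freedom left is inside the middle, $\text{mid}(w)$. If the two words differed in length, that extra length would have to live in $\text{mid}(w)$ as additional characters, and I would show these cannot be simultaneously absorbed by shape-truncation and still leave the norms equal.

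Concretely, I would split into cases according to whether the length discrepancy occurs within a single truncated monotype run or not. If the extra characters in the longer word extend a run that is already truncated in the shape (length $\geq 4$ of a single type), then the shapes can still match, and the whole weight of the contradiction falls on the norm; here I would invoke that the norm reflects the lowercase content faithfully enough that a strictly longer lowercase body cannot yield an equal norm. If instead the extra characters would change the shape pattern (introducing a new symbol block, or extending a run from below length $4$), then the shapes already differ, contradicting shape-equality directly. I would expect the clean way to present this is to first establish the prefix/suffix equalities trivially, then reduce the length claim to a statement purely about $\text{mid}(w_1)$ versus $\text{mid}(w_2)$ under simultaneous norm- and shape-equality, and dispatch that reduced statement by the case analysis above. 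The hard part, and where I would spend the most care, is making the norm-faithfulness step rigorous given the loose ``subsequence'' wording, so I would either appeal to the exact spaCy norm behavior or hedge the argument to use only the properties the paper has committed to.
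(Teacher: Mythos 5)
Your proposal is correct and rests on the same two observations as the paper's proof: prefix and suffix equality is immediate because these are case-sensitive features included in the definition of spaCy-equality, and length equality follows because two words of different lengths cannot have the same (case-insensitive, lowercase) norm. The paper states exactly this in two lines; your additional case analysis on shape truncation is unnecessary, since in the only case it does not dispatch you fall back on the same norm argument anyway.
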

\begin{proof}
The fact that the prefixes and suffixes are the same follows from the fact that they are case-sensitive in spaCy. If $|w_1| \neq |w_2|$ then the two case-insensitive norms of the two words will be different, contradicting the assumption. 
\end{proof}

\begin{proposition}
\label{prop:case}
Let $w_1$ and $w_2$ be two words equal in spaCy. And let $w_1 \neq w_2$. Then for all $j$ such that $w_1(j) \neq w_2(j)$, either $\text{shape}(w_1(j)) = \text{X}$ and $\text{shape}(w_2(j)) = \text{x}$, or $\text{shape}(w_1(j)) = \text{x}$ and $\text{shape}(w_2(j)) = \text{X}$. Furthermore, any such $j$ satisfies: $4 < j < |w_1| - 3 = |w_2| - 3$.
\end{proposition}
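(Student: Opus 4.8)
The plan is to analyze, position by position, what it means for two words $w_1$ and $w_2$ to be equal in spaCy while differing as actual strings. By Proposition~\ref{prop:length}, I may already assume the two words have the same length, the same prefix (first character), and the same suffix (last three characters). So any differing position $j$ must lie strictly inside the middle of the word, i.e.\ outside the first character and outside the last three. This immediately gives the lower bound $j > 1$ and the upper bound $j < |w_1| - 3$. The stronger lower bound $j > 4$ is where the shape-truncation rule must enter, so I anticipate that being the crux.

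First I would establish the case-swap claim. Fix any position $j$ with $w_1(j) \neq w_2(j)$. The norm of each word is (at least) the lower-case form of the word, and since $w_1$ and $w_2$ are equal in spaCy their norms coincide. Thus the lower-case characters at position $j$ must agree, which forces $w_1(j)$ and $w_2(j)$ to be the same letter differing only in case — in particular both must be alphabetic (a digit or special symbol has no case variant and would break either the norm equality or, for special symbols, the shape equality). Hence one of them has shape character \text{X} and the other \text{x}, giving exactly the stated dichotomy.

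Next I would pin down the range of $j$ using the shape equality together with the truncation rule. Because $w_1(j)$ and $w_2(j)$ differ in case, the raw (untruncated) shape strings of $w_1$ and $w_2$ disagree at position $j$ (one has \text{X}, the other \text{x}). Since $\text{shape}(w_1) = \text{shape}(w_2)$ after truncation, this disagreement must be erased by the truncation of runs of identical shape characters after length $4$. The key step is to argue that this can only happen when position $j$ is buried sufficiently deep inside a run of alphabetic characters that the differing entry gets truncated away in both words; I would make this precise by noting the prefix (position $1$) and suffix (last three positions) are already fixed and identical, then showing that for the truncated shapes to remain equal the position $j$ must satisfy $4 < j$. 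The intended bound $4 < j < |w_1| - 3$ then follows by combining the suffix constraint from Proposition~\ref{prop:length} with this truncation argument.

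The main obstacle I foresee is the truncation step: I need to argue carefully that a case-difference at position $j$ can be \emph{invisible} to the truncated shape \emph{only} when $j > 4$, and to do this cleanly I must track how runs of \text{X}/\text{x} characters in $w_1$ and $w_2$ align. The subtlety is that the two words could in principle have differently-placed runs, yet yield the same truncated shape; I would handle this by observing that the positions $\le 4$ of the shape string are never truncated (truncation only drops characters \emph{after} the fourth in a run), so any case mismatch at position $j \le 4$ would survive into the truncated shapes and contradict $\text{shape}(w_1)=\text{shape}(w_2)$. That observation is what converts the loose bound $j>1$ into the sharp bound $j>4$, and getting its boundary cases exactly right is the part requiring the most care.
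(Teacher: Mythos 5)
Your argument matches the paper's proof essentially step for step: norm equality rules out digits and special characters and forces a pure case swap at any differing position, Proposition~\ref{prop:length} confines $j$ to the middle of the word, and the observation that the first four characters always survive shape truncation (since truncation only drops the fifth and later characters of a run) eliminates $j \in \{2,3,4\}$. One shared quibble: both you and the paper obtain the upper bound solely from the suffix constraint, which strictly yields $j \le |w_1|-3$ rather than $j < |w_1|-3$, but that imprecision is inherited from the statement itself and is not specific to your write-up.
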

\begin{proof}
First note that if either $w_1(j)$ or $w_2(j)$ is a digit or a special character, the norms of the two will be different, contradicting the assumption. This only leaves the possibility that $w_1(j)$ has either shape X or x. In either case $w_2(j)$ should have the case-inverted shape, since $w_1(j) \neq w_2(j)$. From Proposition~\ref{prop:length}, since the prefixes and suffixes of the two words are the same, such a $j$ can only be part of the middle of the two words. This leaves $j \in \{2, 3, 4\}$. Consider one such $j$. Since the length of the prefix is 1, $w_1(j)$ and $w_2(j)$ will not be truncated in the shape. Hence the shapes of the two words will have a different shape at the $j$th character, contradicting the assumption that they are equal in spaCy.
\end{proof}

\begin{proposition}
\label{prop:mid}
Let $w_1$ and $w_2$ be two words equal in spaCy. And let $w_1 \neq w_2$. Then $|\text{mid}(w_1)| = |\text{mid}(w_2)| \geq 5$. 
\end{proposition}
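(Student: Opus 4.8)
The plan is to reduce the statement entirely to the two preceding propositions, treating the equality of middle lengths and the lower bound as separate claims. First I would record the elementary bookkeeping fact that, by the definition of $\text{mid}$ in Appendix~\ref{app:equiv-proof}, removing the one-character prefix and the three-character suffix leaves $|\text{mid}(w)| = |w| - 4$ for any word with $|w| \geq 4$ (so that the prefix and suffix do not overlap). Since Proposition~\ref{prop:length} already gives $|w_1| = |w_2|$, the equality $|\text{mid}(w_1)| = |\text{mid}(w_2)|$ follows immediately, and the only real content left is the inequality $\geq 5$.

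For the lower bound I would invoke the hypothesis $w_1 \neq w_2$: by the definition of Hamming difference, there must exist at least one index $j$ with $w_1(j) \neq w_2(j)$. Proposition~\ref{prop:case} constrains every such $j$ to satisfy $4 < j < |w_1| - 3$, so this open interval necessarily contains an integer. This is exactly the point at which I would extract the length bound: the smallest admissible value is $j = 5$, and $5 < |w_1| - 3$ forces $|w_1| > 8$, i.e. $|w_1| \geq 9$. Combining with $|\text{mid}(w_1)| = |w_1| - 4$ yields $|\text{mid}(w_1)| \geq 5$, and by the equality already established the same bound holds for $w_2$.

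The argument is therefore short, and I do not anticipate a genuine obstacle so much as two bookkeeping points that must be handled carefully. The first is the off-by-one arithmetic in passing from $4 < j < |w_1| - 3$ to $|w_1| \geq 9$: I must respect that ``$4 < j$'' means $j \geq 5$ and that ``$j < |w_1| - 3$'' is strict, since a sloppy inclusive reading would weaken the derived bound. The second is to confirm that $|w_1| \geq 9$ is consistent with the standing assumption that $\text{mid}(w)$ is well defined, which it is since $9 > 4$ guarantees disjoint prefix and suffix. No fresh case analysis on shapes or norms is required, as all of that work has already been absorbed into Proposition~\ref{prop:case}.
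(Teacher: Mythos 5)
Your derivation is formally valid as a consequence of the literal statements of Propositions~\ref{prop:length} and~\ref{prop:case}, and it takes a genuinely different route from the paper's. The paper argues by contradiction: it assumes $|\text{mid}(w_1)| < 5$ (so $|w_1| \le 8$), reduces to the boundary positions $j \in \{5,6,7\}$, and then runs a shape/truncation case analysis (the ``XXXX'' argument) to show that each such $j$ would force the two shapes to differ. You instead read the bound $4 < j < |w_1| - 3$ directly off Proposition~\ref{prop:case}, note that a differing index must exist because $w_1 \neq w_2$ while $|w_1| = |w_2|$, and extract $|w_1| \ge 9$ by integer arithmetic. Your bookkeeping ($|\text{mid}(w)| = |w| - 4$, the strict inequalities, the existence of a differing position) is handled correctly, and your route is much shorter.

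The caveat, and it is substantive, is that your entire proof rests on the \emph{strictness} of the upper bound $j < |w_1| - 3$ in Proposition~\ref{prop:case}. The paper's proof of that proposition only establishes that $j$ lies in the middle and that $j \notin \{2,3,4\}$, i.e., $4 < j \le |w_1| - 3$; the exclusion of the last middle position $j = |w_1| - 3$ is asserted in the statement but never argued there. With the non-strict bound your argument yields only $|w_1| \ge 8$ and $|\text{mid}(w_1)| \ge 4$, one short of the claim. The shape/truncation analysis in the paper's proof of Proposition~\ref{prop:mid} is precisely what closes this boundary case, which is presumably why the authors did not take your shortcut. So your proof is acceptable if one is permitted to cite Proposition~\ref{prop:case} exactly as stated, but it silently outsources the hardest step to a clause whose own proof is incomplete; to be self-contained you would need to supply the argument that a case flip at position $|w_1| - 3$ cannot be absorbed by shape truncation.
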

\begin{proof}
From Proposition~\ref{prop:length}, the two words are of equal length. Hence the length of the middle of the two words is also equal. Let us assume that $|\text{mid}(w_1)| < 5$. This means $w_1$ and $w_2$ have at most 4 characters in their middle, and a total length of $\le 8$ including the prefix and suffix. The case that $\text{mid}(w_1)$ is empty is not possible, since then the word only consists of the prefix and suffix, which are the same for the two words. Thus, $0 < |\text{mid}(w_1)| < 5$. From Proposition~\ref{prop:case}, any $j$ satisfying $w_1(j) \neq w_2(j)$ must be from the set $\{5, 6, 7\}$. Consider $j = 5$, and wlog assume the shape of $w_1(5)$ to be X and that of $w_2(5)$ to be x. The shape of the two words up to the $4$th character is the same. Denote this shape by $s$. If the four characters of $s$ are not of the form xxxx, or XXXX, then the $5$th character of $\text{shape}(w_1)$ will be X and that of $\text{shape}(w_2)$ will be x, contradicting the assumption that the two shapes are the same. Thus, wlog assume the four characters of $s$ are $\text{XXXX}$. Then there is a shape change to x at character $5$ in $\text{shape}(w_2)$, but $w_1(5)$ does not change the shape (due to truncation). Since $j = 5$, there are $3$ characters left. This means the shape of $w_1$ will have at least one character less than the shape of $w_2$, since there can be no truncation. This shows that the shape of the two words upto the 5th character is the same. This leaves $j = 6$ and $7$. A similar argument above applied to $j = 6$ and $j = 7$, in turn, leads to the same contradiction. 
\end{proof}
Combining the three propositions, this means that if two words are equal in spaCy then they have the same length. And the only way two words equal in spaCy can be different is if they differ in case, and all the differences are from character 5 onwards. The following is an immediate corollary.
\begin{corollary}
If two words $w_1$ and $w_2$ are the same in spaCy and $|w_1| \le 8$, then $w_1 = w_2$.
\end{corollary}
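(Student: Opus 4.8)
The plan is to deduce the corollary directly from the three propositions already established, treating them as a package that constrains how two spaCy-equal words can possibly differ. The strategy is a short proof by contradiction: assume $w_1$ and $w_2$ are equal in spaCy with $|w_1| \le 8$, but suppose $w_1 \neq w_2$ (in the Hamming sense), and then show this forces a length of at least $9$, contradicting the hypothesis.

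First I would invoke Proposition~\ref{prop:length} to record that $|w_1| = |w_2|$, so the common length is some value $\le 8$. Next, assuming for contradiction that $w_1 \neq w_2$, I would apply Proposition~\ref{prop:mid}, which states precisely that under these hypotheses $|\text{mid}(w_1)| = |\text{mid}(w_2)| \ge 5$. The final step is purely arithmetic: the full word decomposes as prefix (length $1$) concatenated with middle concatenated with suffix (length $3$), so $|w_1| = 1 + |\text{mid}(w_1)| + 3 \ge 1 + 5 + 3 = 9$. This contradicts $|w_1| \le 8$, so the assumption $w_1 \neq w_2$ must be false, giving $w_1 = w_2$.

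I do not expect a genuine obstacle here, since all the real work lives in Proposition~\ref{prop:mid}; the corollary is essentially a repackaging of that bound with the length budget made explicit. The one point requiring a little care is the boundary case in the length decomposition: I would confirm that the prefix and suffix do not overlap, i.e. that the $\ge 5$ lower bound on the middle genuinely adds to the fixed $1 + 3 = 4$ from prefix and suffix rather than sharing characters with them. Since Proposition~\ref{prop:mid} defines $\text{mid}(w)$ as what remains after removing the (disjoint) prefix and suffix, and its proof already rules out the degenerate short cases, this decomposition is exactly the intended one and the arithmetic closes cleanly.
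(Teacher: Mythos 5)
Your proof is correct and matches the paper's intent: the paper presents the corollary as an immediate consequence of the three propositions (the text just before it summarizes exactly the combination you spell out), and your argument via Proposition~\ref{prop:length}, Proposition~\ref{prop:mid}, and the arithmetic $1+5+3=9>8$ is precisely that route made explicit.
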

Thus, the minimum length of words for which the two notions differ is 9. We have already given an example of such a pair.

\section{Memorization differences between spaCy 2 and spaCy 3}

\begin{figure}[t]
    \centering
    \begin{subfigure}{0.9\linewidth}
    \centering
\includegraphics[width=\linewidth]{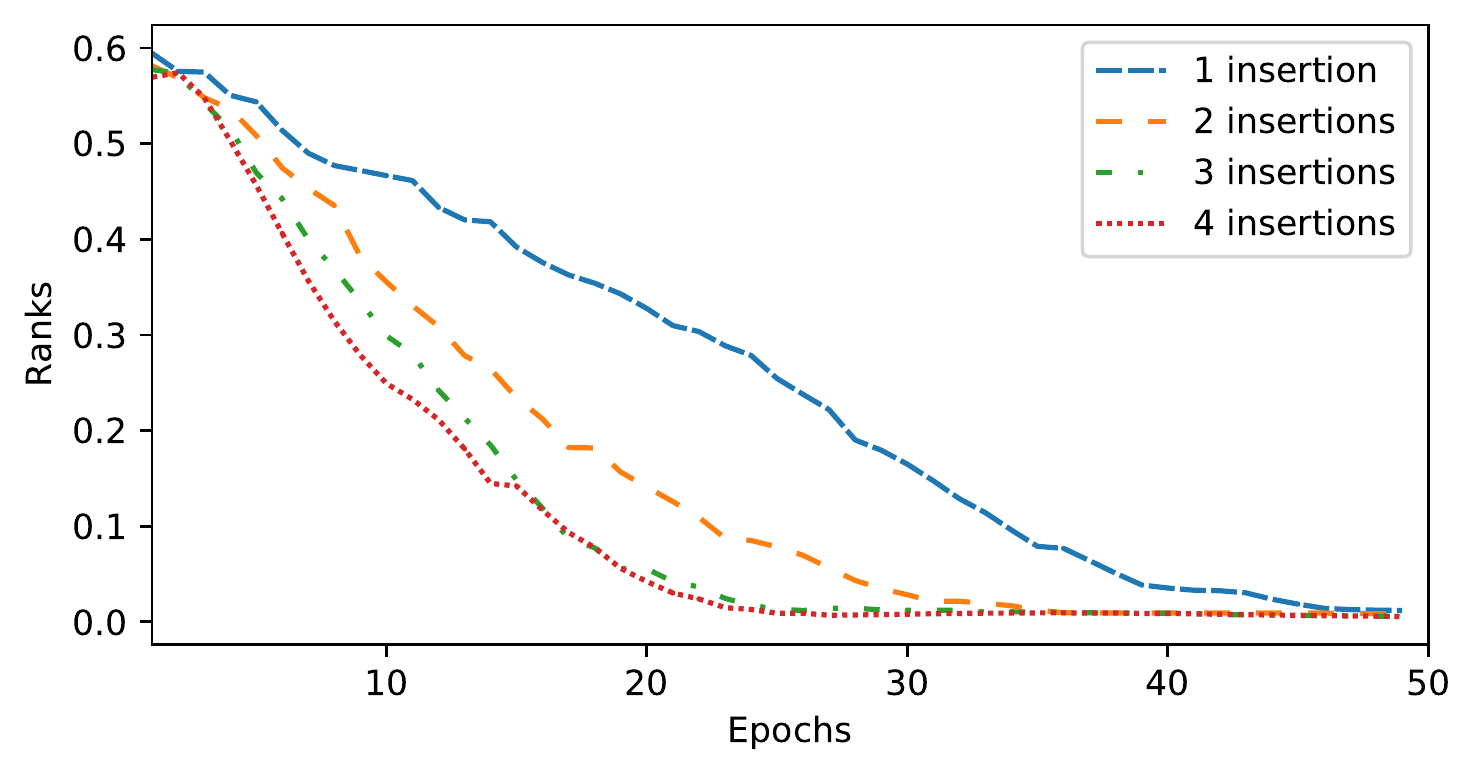}
     \caption{spaCy 2.3.2}\label{fig:app-memorization-spacy2}
    \end{subfigure}
    
    \begin{subfigure}{0.9\linewidth}
    \centering
    \includegraphics[width=\linewidth]{plots/SINGLE_RANK_PER_EPOCH_AND_INSERTION_AVERAGED_LINE_PLOT_spacy3.0.3_XLIM_10.pdf}
    \caption{spaCy 3.0.3}\label{fig:app-memorization-spacy3}
    \end{subfigure}
    \caption{The extent of model memorization for out-of-distribution words in two different versions of spaCy. ``Insertions'' refer to the number of times the same phrase is repeated in the training set, ``Epochs'' is the number of training iterations the model undergoes during the update process.}
    \label{fig:app-memorization-spacy}
\end{figure}

\begin{figure*}[t]
    \centering
    \begin{subfigure}{0.45\textwidth}
    \includegraphics[width=\linewidth]{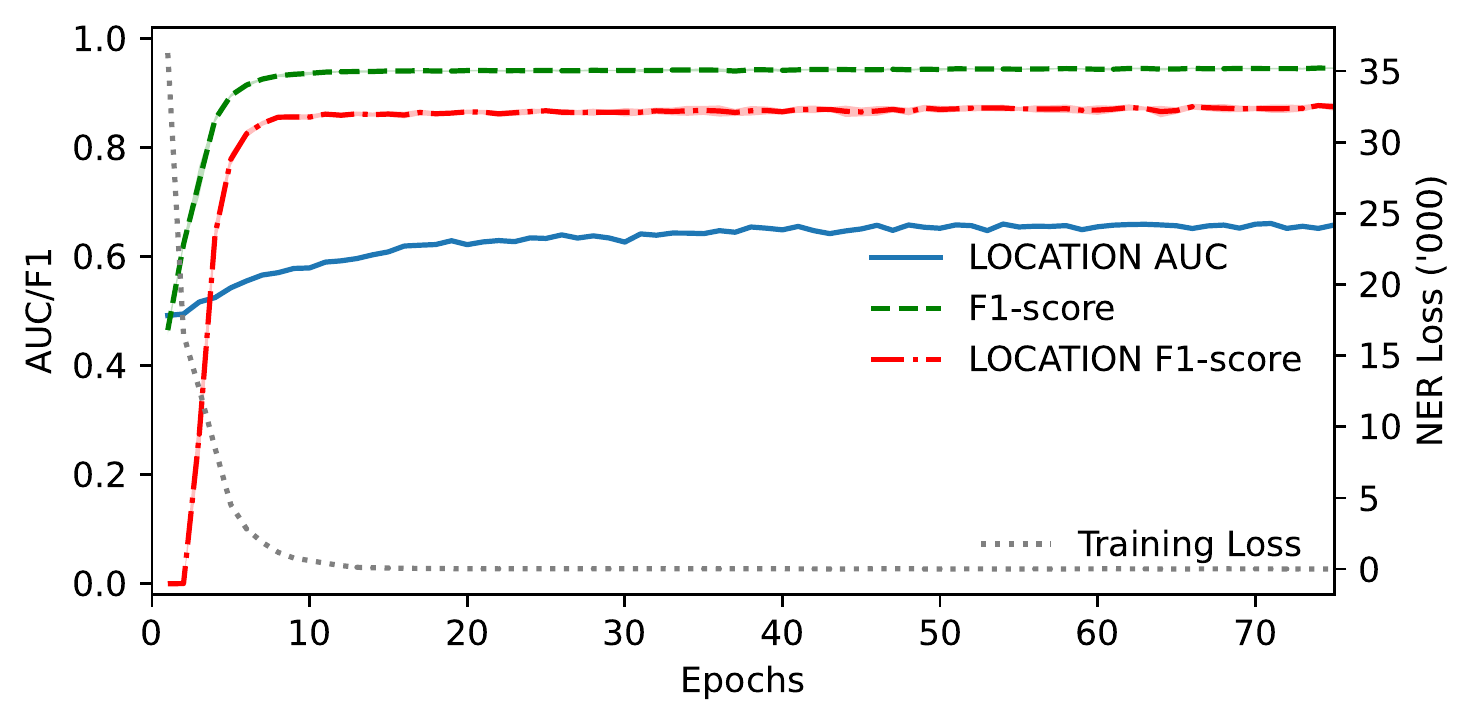}
     \caption{LOCATION}
    \end{subfigure}
    \begin{subfigure}{0.45\textwidth}
    \includegraphics[width=\linewidth]{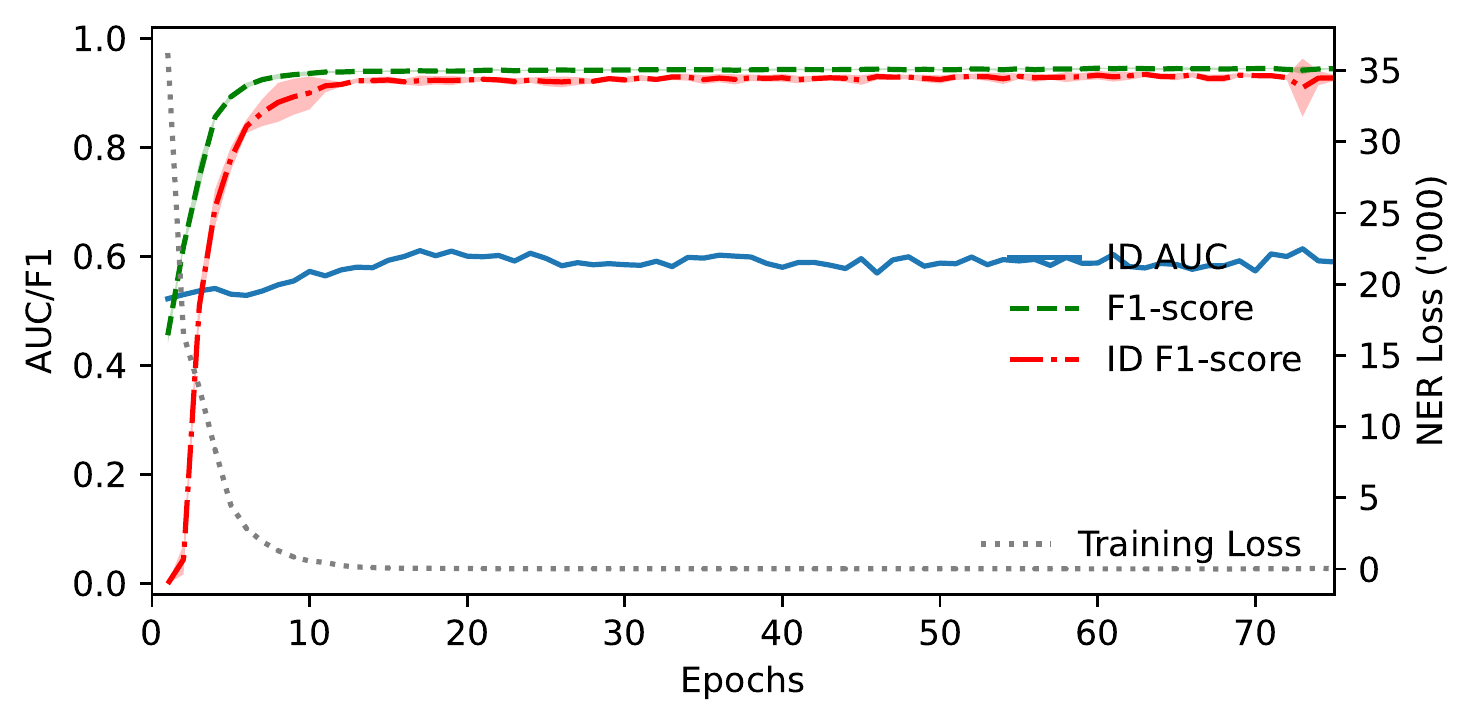}
    \caption{ID}
    \end{subfigure}
    
    \begin{subfigure}{0.45\textwidth}
    \includegraphics[width=\linewidth]{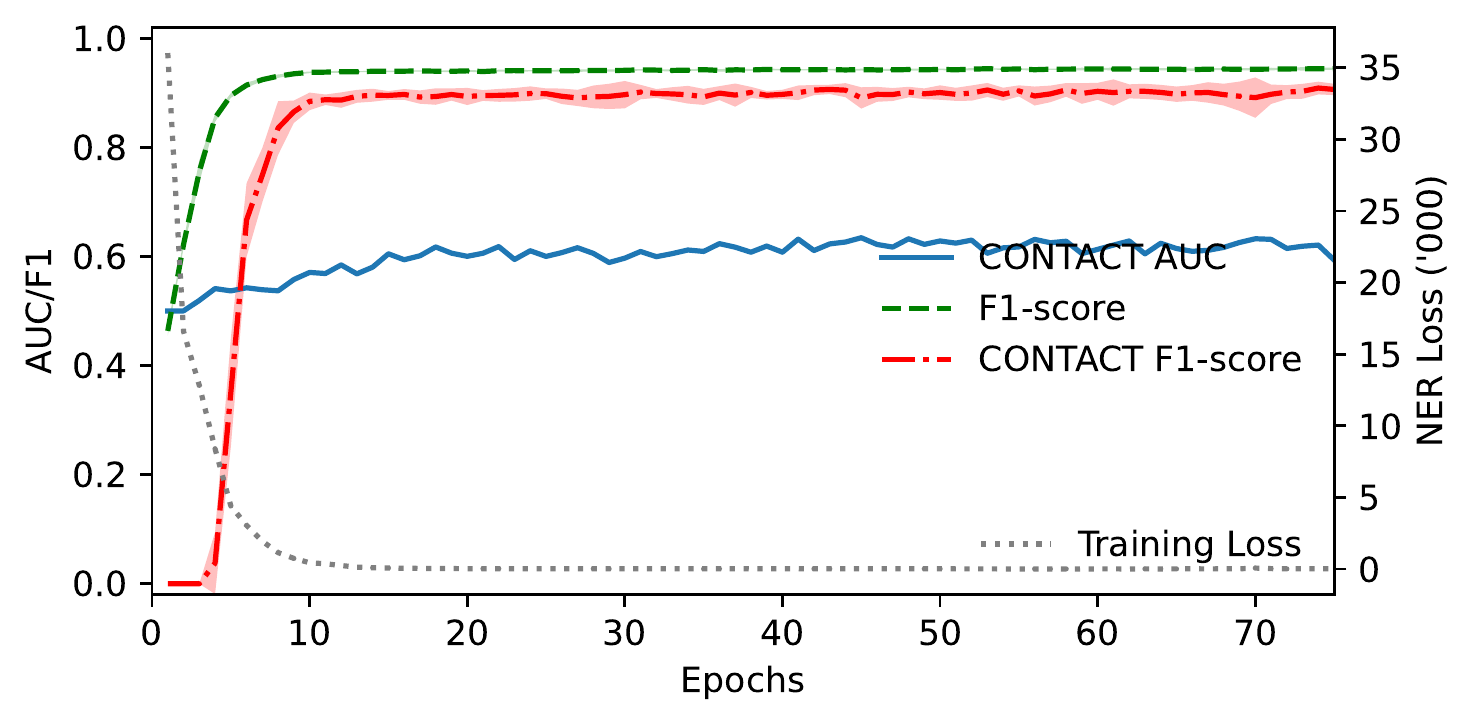}
    \caption{CONTACT}
    \end{subfigure}
    \begin{subfigure}{0.45\textwidth}
    \includegraphics[width=\linewidth]{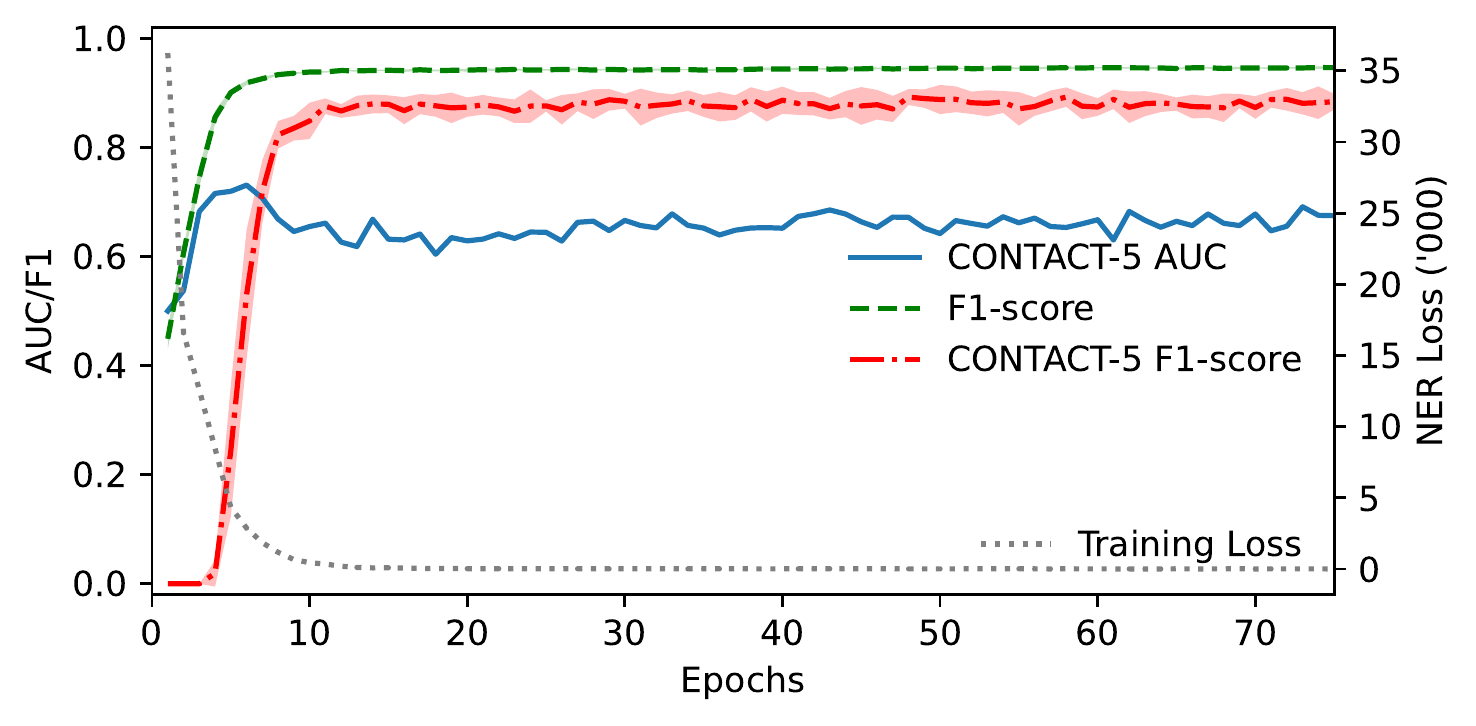}
    \caption{CONTACT-5}
    \end{subfigure}
    
    \begin{subfigure}{0.45\textwidth}
    \includegraphics[width=\linewidth]{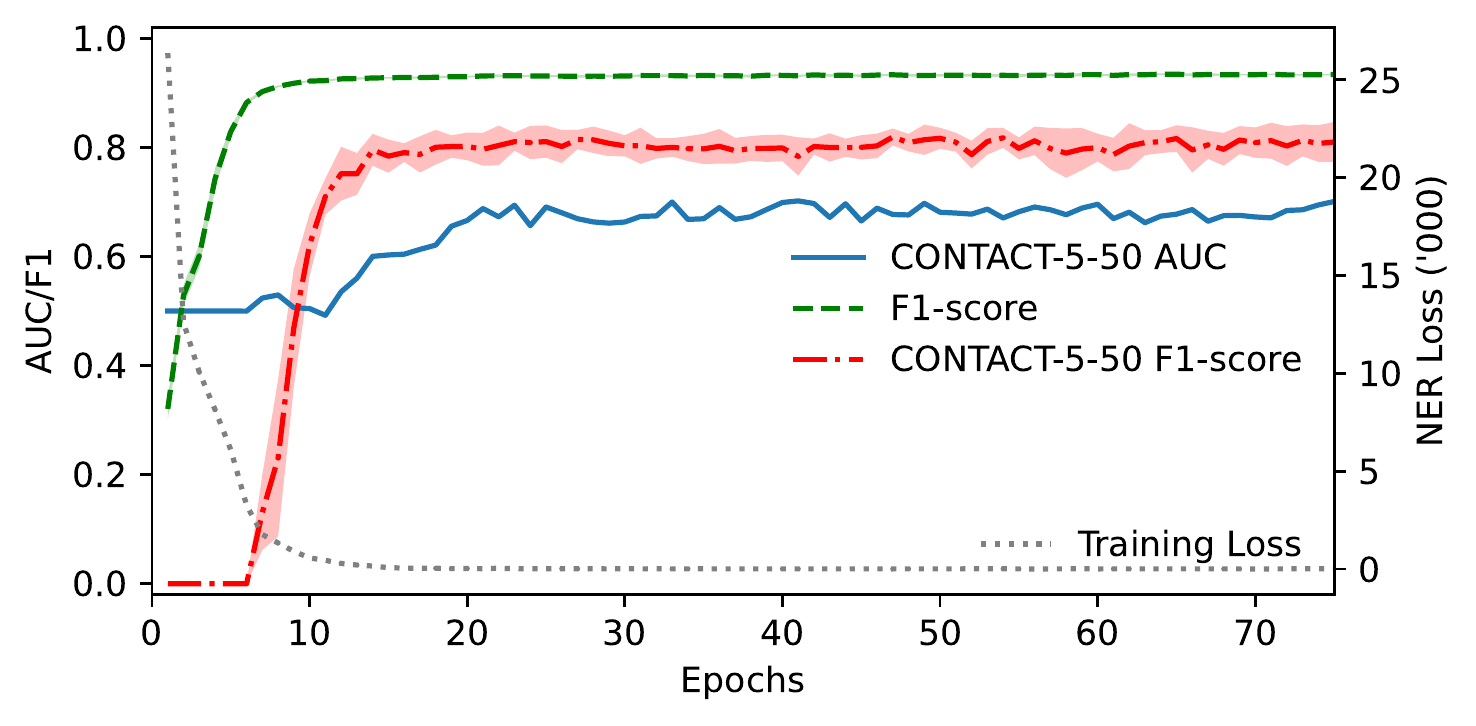}
    \caption{CONTACT-5-50}
    \end{subfigure}
    \begin{subfigure}{0.45\textwidth}
    \includegraphics[width=\linewidth]{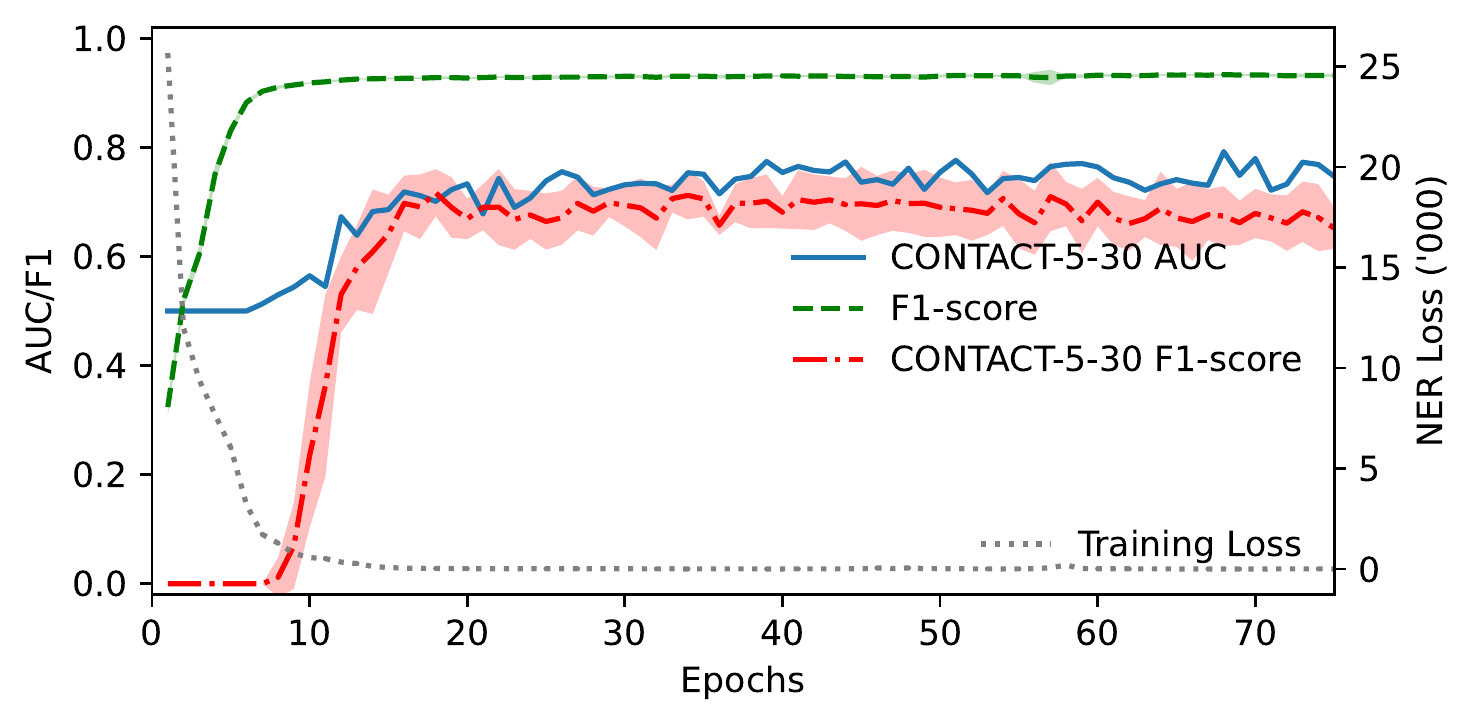}
    \caption{CONTACT-5-30}
    \end{subfigure}
    
    \begin{subfigure}{0.45\textwidth}
    \includegraphics[width=\textwidth]{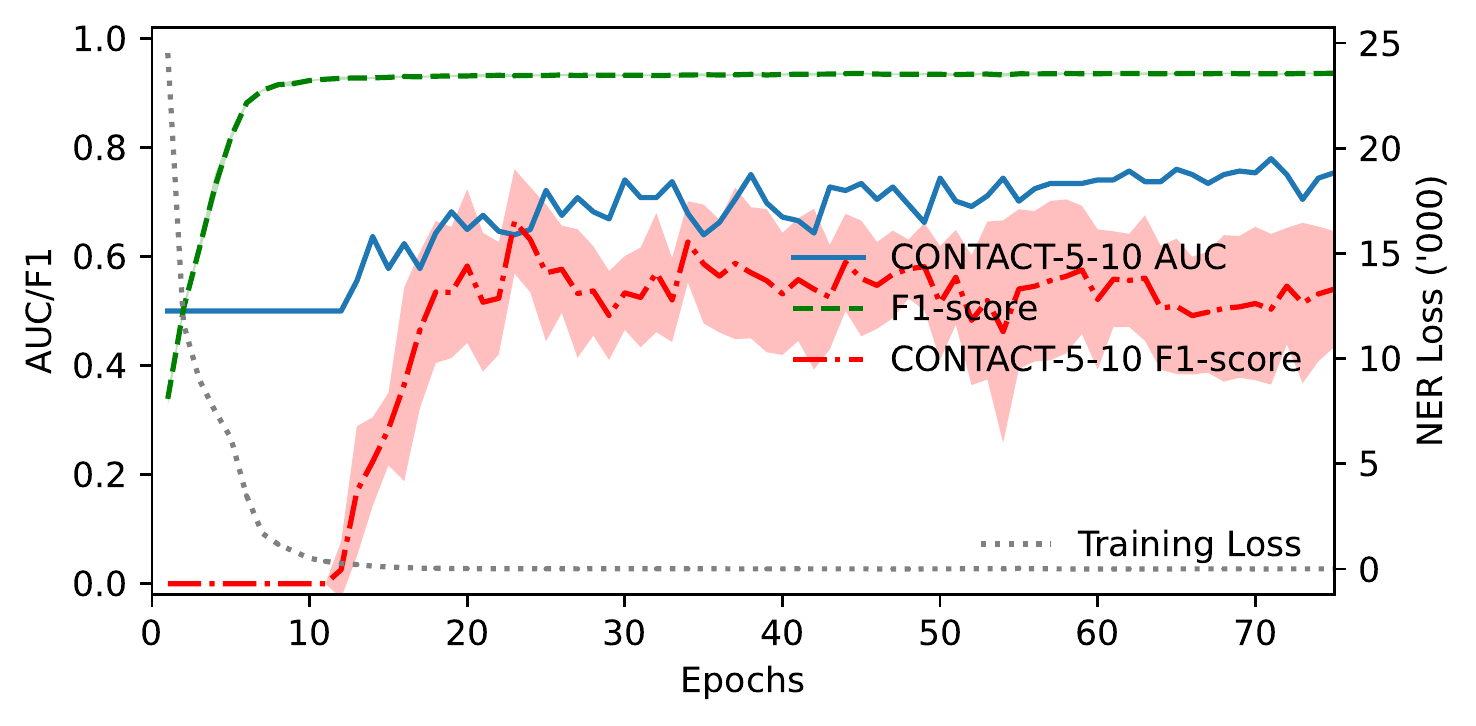}
    \caption{CONTACT-5-10}
    \end{subfigure}
    \caption{Individualized MI AUC, NER Loss and F1-score for each of the entity configurations investigated in Section~\ref{subsec:MIA_medical_dataset}. MI AUC and F1-score share the same y-axis scale on the left.}
    \label{fig:full_i2b2_loss}
\end{figure*}

An additional experiment was performed on two different versions of spaCy: version 3 (which is the current version at the time of writing) and the older, version 2. We use $N = 10$ different target passwords chosen randomly from a space of $|\mathcal{R}| = 2000$ passwords, which in turn are picked randomly from a public password dictionary with one million passwords~\cite{github_passwordlist}.

We then measure the rank of each target password averaged over 48 runs. We vary the number of insertions of the phrase $s[r]$ from 1 to 4. The average (normalized) ranks of the target passwords for spaCy 2 and 3, as a function of the number of epochs used to update the NN model are shown in Figures~\ref{fig:app-memorization-spacy2} and~\ref{fig:app-memorization-spacy3}, respectively. After a certain number of epochs, both versions of spaCy memorize the target password (rank of 0). However, the newer release of spaCy shows faster memorization: less than 10 epochs even with a single insertion. Thus the newer version fairs worse in terms of unintended memorization. Having no substantial architectural differences between models of the two versions, it's interesting to observe a major difference in model memorization on a single phrase. Determining the exact cause for the faster memorization would requite a thorough inspection of the source code.

\section{Complete training loss plots for the document redaction use-case}\label{app:full_docredact_loss}
We present individualized MI AUC, NER training loss and individual entity f1-score for each of the entity configurations investigated in Section~\ref{subsec:MIA_medical_dataset} in Figure~\ref{fig:full_i2b2_loss}.
In Figure~\ref{fig:full_i2b2_secret_loss}, we present individualized MI AUC, NER training loss and individual f1-score for each of the SECRET configurations investigated in Section~\ref{subsubsec:i2b2_secret_redaction}. Similarly, in Figure~\ref{fig:defend_i2b2}, from our defense investigation in replacing training entities with dummy data in Section~\ref{sec:defences}. For each of these figures the standard deviation of the target entity's F1 score is displayed in the shaded red region of the figure.

\begin{figure*}[t]
    \centering
    \begin{subfigure}{0.45\textwidth}
    \includegraphics[width=\linewidth]{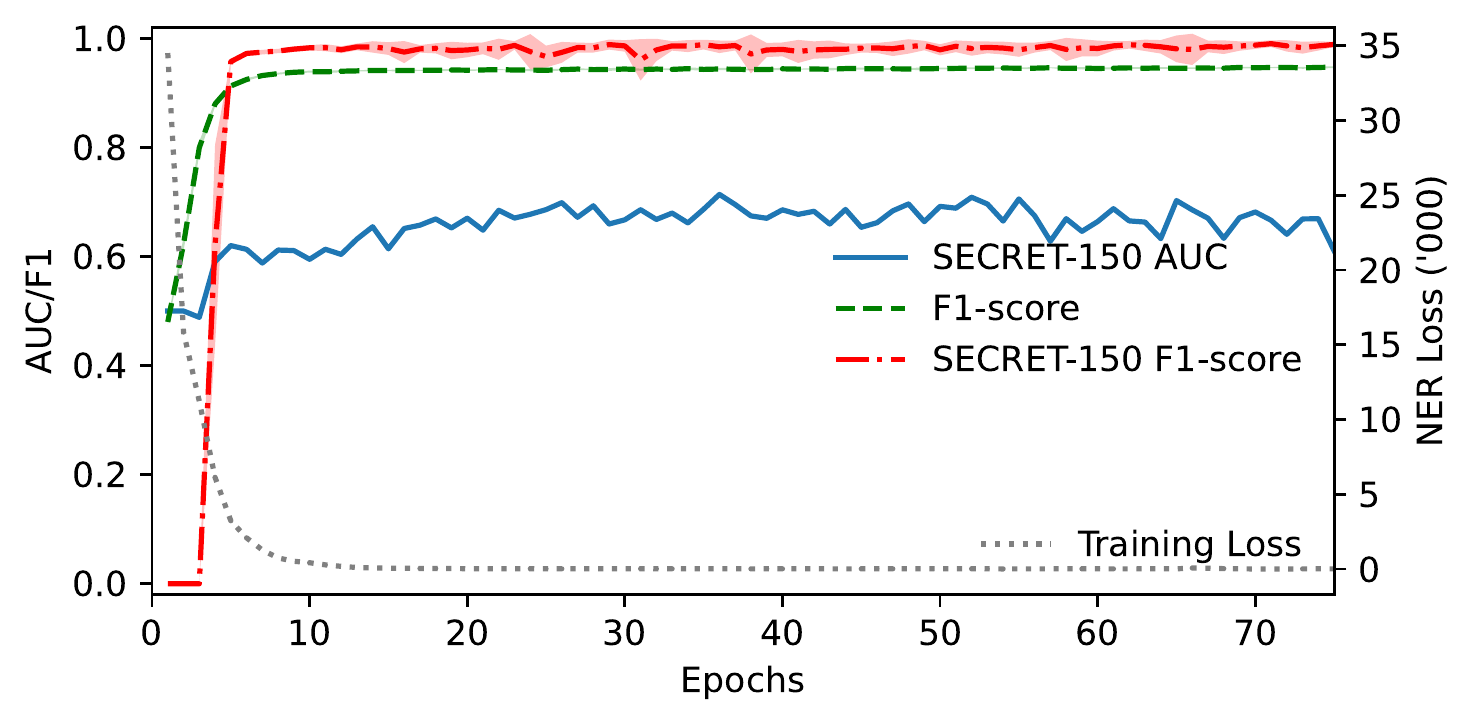}
     \caption{SECRET-150}
    \end{subfigure}
    \begin{subfigure}{0.45\textwidth}
    \includegraphics[width=\linewidth]{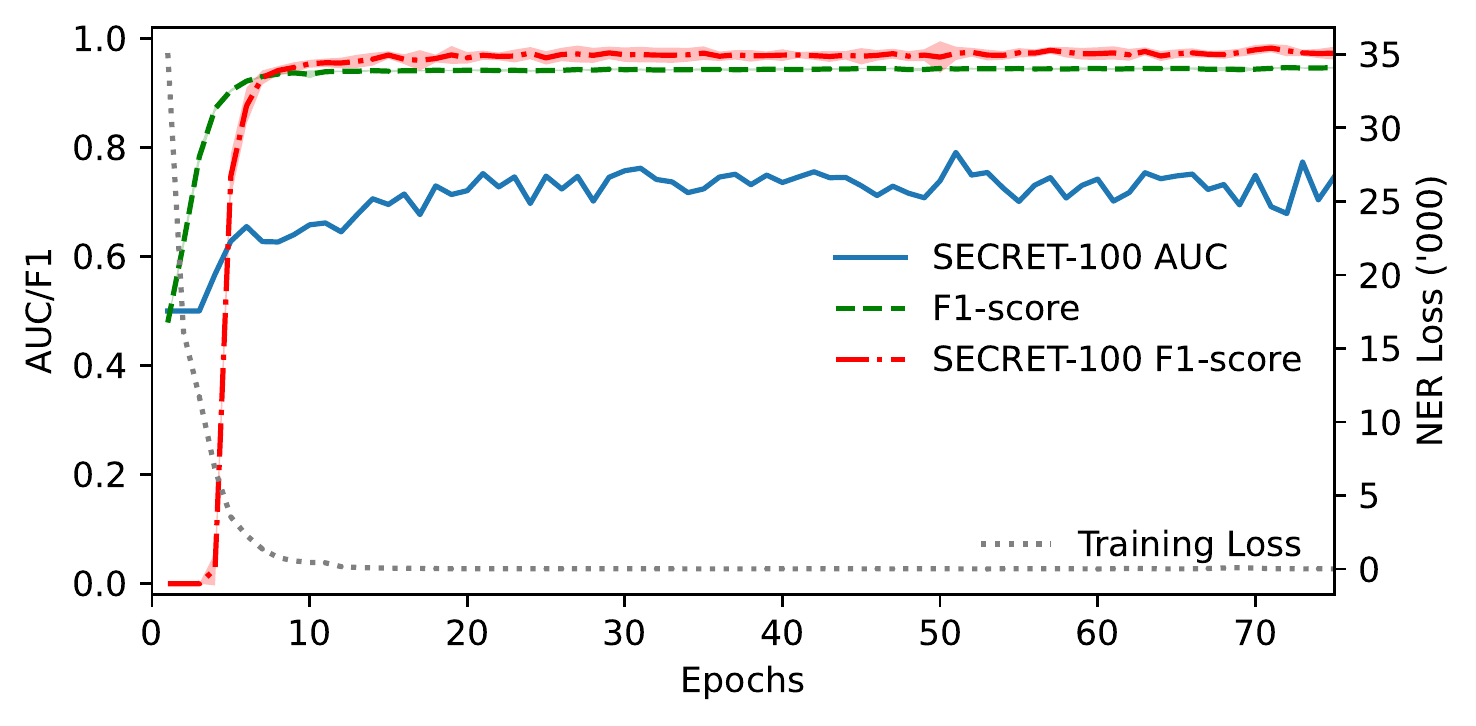}
    \caption{SECRET-100}
    \end{subfigure}
    
    \begin{subfigure}{0.45\textwidth}
    \includegraphics[width=\linewidth]{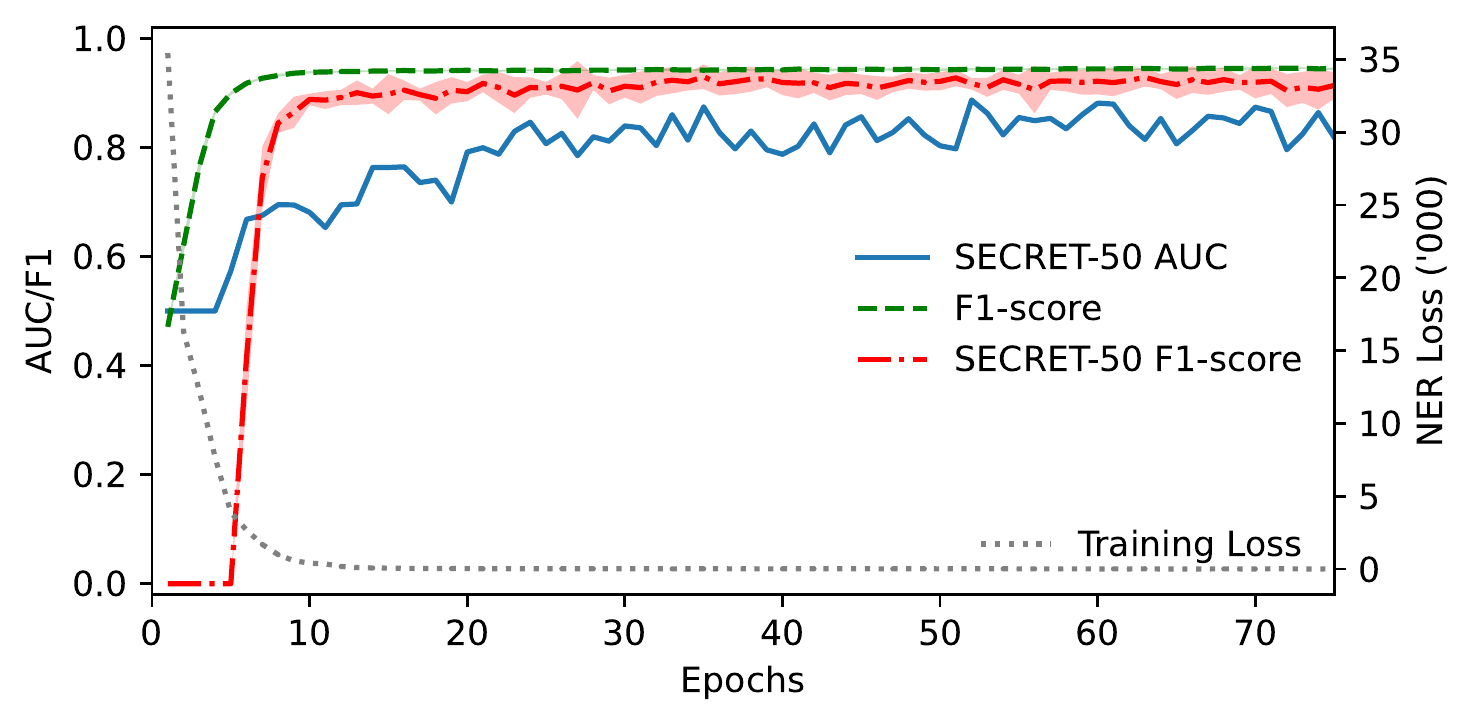}
    \caption{SECRET-50}
    \end{subfigure}
    \begin{subfigure}{0.45\textwidth}
    \includegraphics[width=\linewidth]{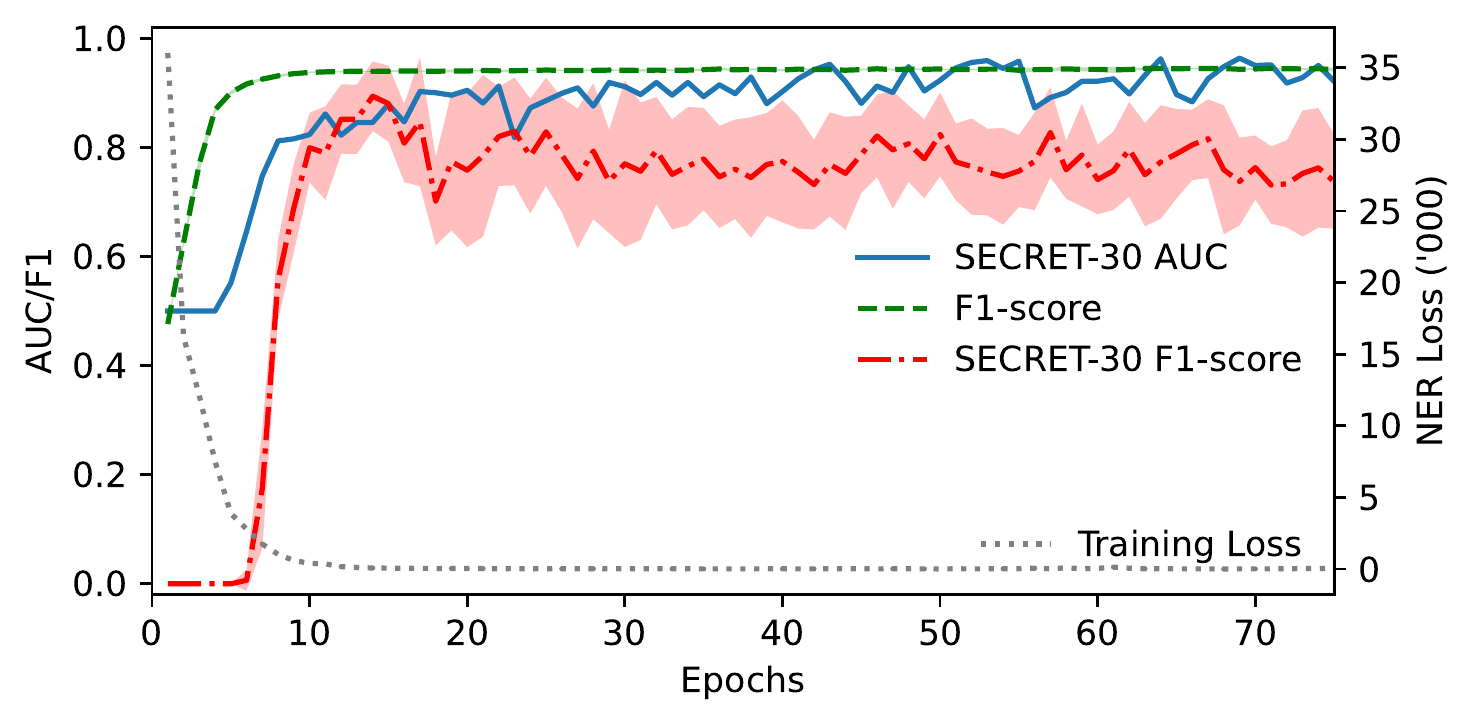}
    \caption{SECRET-30}
    \end{subfigure}

    \caption{Individualized MI AUC, NER Loss and F1-score for the SECRET class with controlled training and test sample configurations investigated in Section~\ref{subsubsec:i2b2_secret_redaction}. MI AUC and F1-score share the same y-axis scale on the left.}
    \label{fig:full_i2b2_secret_loss}
\end{figure*}

\begin{figure*}[t]
    \centering
    \begin{subfigure}{0.45\textwidth}
    \includegraphics[width=\linewidth]{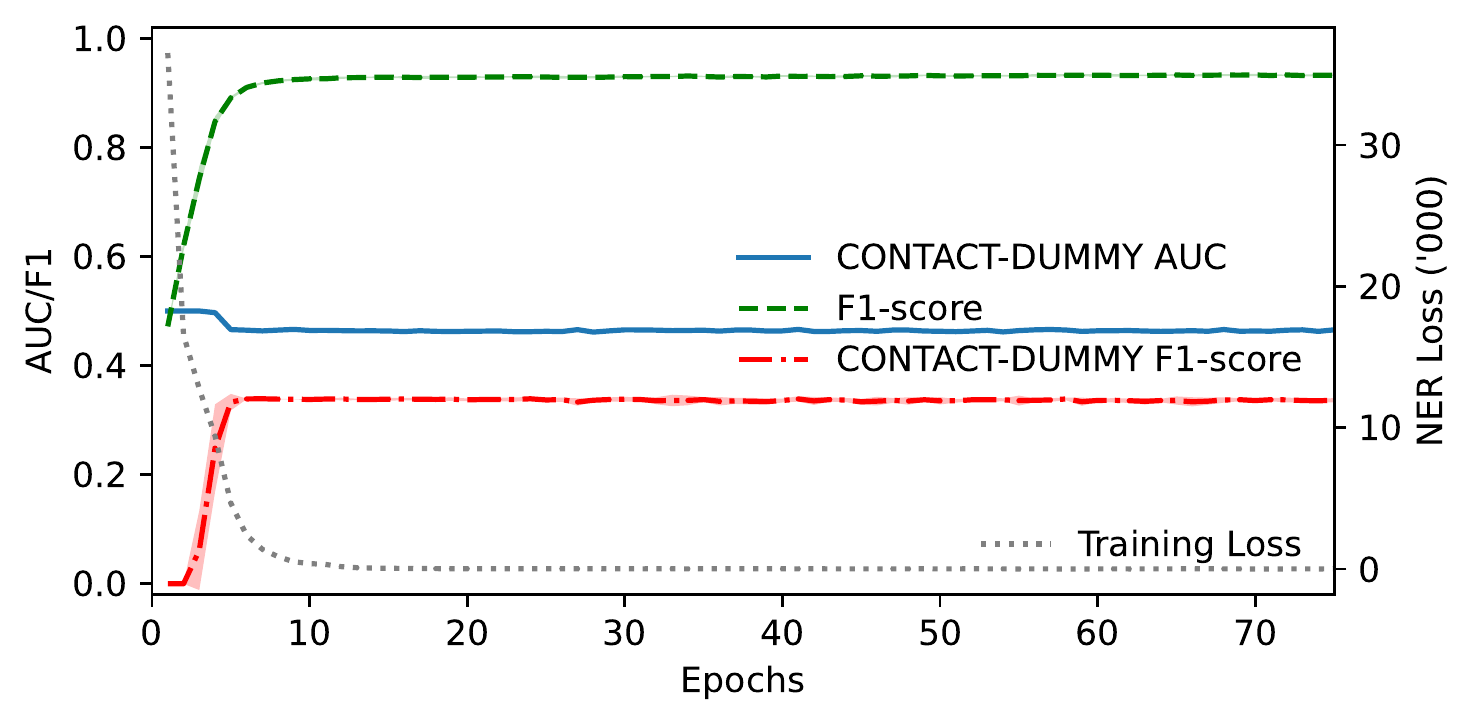}
     \caption{DUMMY CONTACT}
    \end{subfigure}
    \begin{subfigure}{0.45\textwidth}
    \includegraphics[width=\linewidth]{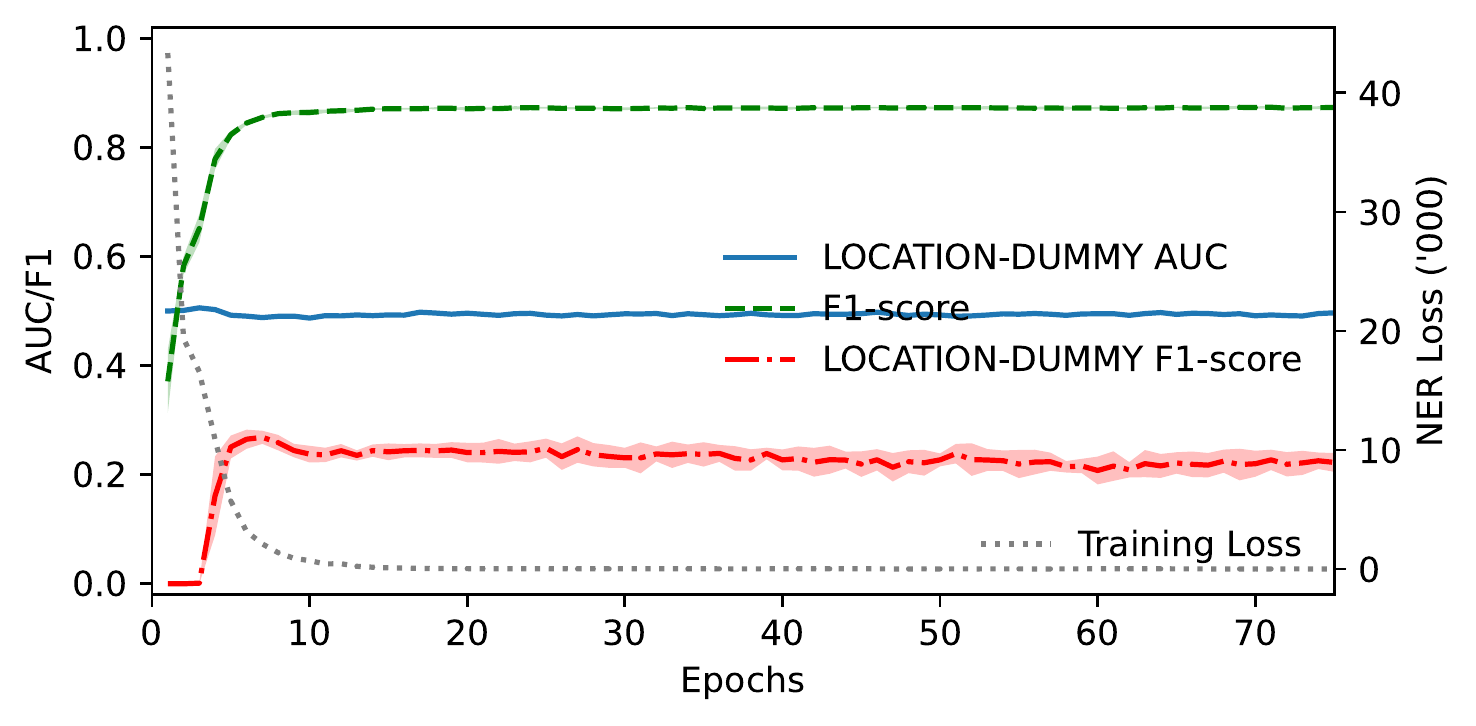}
    \caption{DUMMY LOCATION}
    \end{subfigure}
    \caption{{Individualized MI AUC, NER Loss and F1-score for the defense method investigated in Section~\ref{sec:defences}. MI AUC and F1-score share the same y-axis scale on the left.}}
    \label{fig:defend_i2b2}
\end{figure*}

\section{Timing attack comparison}
\label{sec:app:timing}

\begin{figure*}[t]
\centering
\begin{subfigure}{0.245\textwidth}
\includegraphics[width=\linewidth]{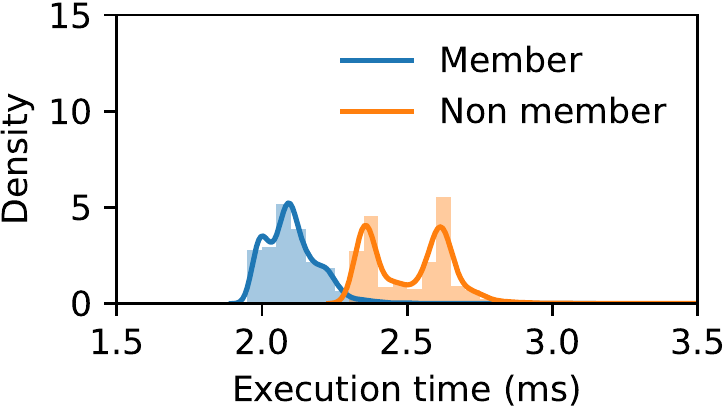}
\caption{LG pipeline, original model}
\label{fig:absolute_time_lg_orig_ner}
\end{subfigure}
\begin{subfigure}{0.245\textwidth}
\includegraphics[width=\linewidth]{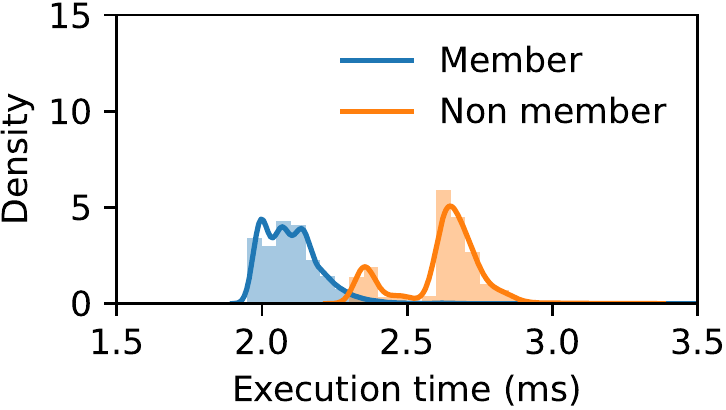}
\caption{TRF pipeline, original model}
\label{fig:absolute_time_trf_orig_ner}
\end{subfigure}
\begin{subfigure}{0.245\textwidth}
\includegraphics[width=\linewidth]{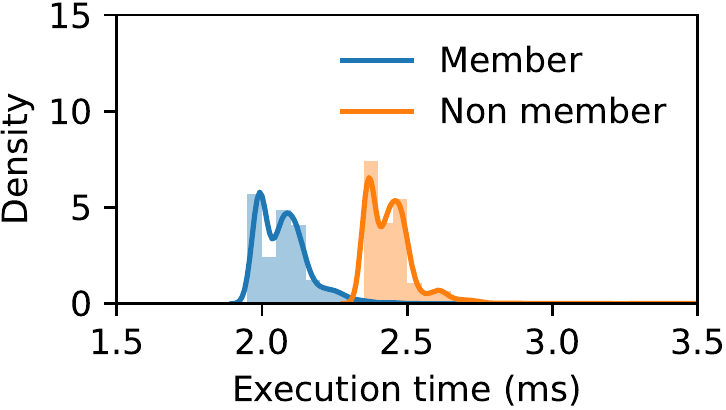}
\caption{LG pipeline, updated model}
\label{fig:absolute_time_lg_update_ner}
\end{subfigure}
\begin{subfigure}{0.245\textwidth}
\includegraphics[width=\linewidth]{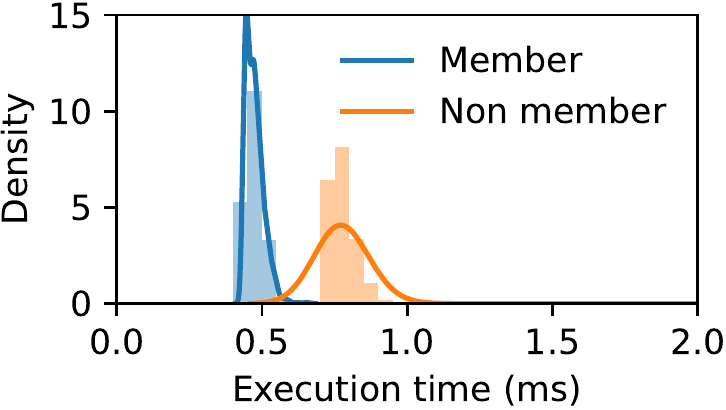}
\caption{TRF pipeline, updated model}
\label{fig:absolute_time_trf_update_ner}
\end{subfigure}
\caption{Execution time distribution of original and updated NER models for processing words inside (member) and outside (non-member) the updating dataset. Results for both a tok2vec (LG) and ROBERTA based transformer (TRF) pipeline is presented.}
\label{fig:absolute_time_ner}
\end{figure*}

In this appendix, we provide a visual representation of the differing timings between member and non-member words when processed by spaCy. Recall that Figure~\ref{fig:absolute_time_lg_orig_ner} \& \ref{fig:absolute_time_trf_orig_ner} shows the execution time distribution of the original pre-trained NER model to process in-vocab and out-vocab words. Next, Figure~\ref{fig:absolute_time_lg_update_ner} \& \ref{fig:absolute_time_trf_update_ner} displays the time distribution of the updated NER model to process passwords inside and outside the updating dataset.



\end{document}